\newcommand{\recoverycode}[1]{\textcolor{blue}{#1}}
\newcommand{\pwb}[1]{\textbf{pwb(#1)}}
\newcommand{\psync}[1]{\textbf{psync(#1)}}
\newcommand{\pwbi}{\mbox{\texttt{pwb}}}
\newcommand{\pfencei}{\mbox{\texttt{pfence}}}
\newcommand{\psynci}{\mbox{\texttt{psync}}}
\newcommand{\CAS}{\mbox{\textit{CAS}}}
\newcommand{\FAI}{\mbox{\textit{FAI}}}
\newcommand{\TAS}{\mbox{\textit{TAS}}}
\newcommand{\True}{\mbox{\texttt{true}}}
\newcommand{\False}{\mbox{\texttt{false}}}
\newcommand{\Head}{\texttt{Head}}
\newcommand{\First}{\texttt{First}}
\newcommand{\Tail}{\texttt{Tail}}
\newcommand{\Last}{\texttt{Last}}
\newcommand{\Q}{\mathcal{Q}}
\newcommand{\LinQ}{\texttt{LinQ}}
\newcommand{\nOps}{\ensuremath{\texttt{nOps}}}
\newcommand{\nd}{\ensuremath{\texttt{nd}}}
\newcommand{\f}{\ensuremath{\texttt{f}}}
\newcommand{\LCRQ}{{\scshape LCRQ}}
\newcommand{\PLCRQ}{{\scshape PerLCRQ}}
\newcommand{\CRQ}{{\scshape CRQ}}
\newcommand{\PCRQ}{{\scshape PerCRQ}}
\newcommand{\IQ}{{\scshape IQ}}
\newcommand{\PIQ}{{\scshape PerIQ}}
\newcommand{\FIFO}{{\scshape FIFO}}
\newcommand{\NULL}{\mbox{\scshape Null}}
\newcommand{\PBQueue}{\mbox{\scshape PBQueue}}
\newcommand{\PWFQueue}{\mbox{\scshape PWFQueue}}
\newcommand{\Recovery}{\Call{Recovery}{}}
\newcommand{\localHead}{\mathit{h}}
\newcommand{\localTail}{\mathit{t}}
\newcommand{\itemX}{\mathit{x}}
\newcommand{\crqV}{\texttt{crq}}
\newcommand{\crqVar}{\texttt{}}
\newcommand{\EMPTY}{\texttt{EMPTY}}
\newcommand{\CLOSED}{\texttt{CLOSED}}
\newcommand{\OK}{\texttt{OK}}
\newcommand{\cntb}{count_{\bot}}
\newcommand{\var}[1]{\mathit{#1}}
\newcommand{\Enqueue}{\mbox{\sc Enqueue}}
\newcommand{\Dequeue}{\mbox{\sc Dequeue}}
\newcommand{\remove}[1]{}
\newcommand{\Youla}[1]{}
\newcommand{\Y}[1]{}
\newcommand{\Note}[1]{}
\newcommand{\here}[1]{{\color{teal} #1}\normalcolor}
\newcommand{\y}[1]{{\color{blue} #1}\normalcolor}
\newcommand{\nick}[1]{}
\newcommand{\PWFqueue}{{\sc PWFqueue}}
\newcommand{\PBqueue}{{\sc PBqueue}}
\newcommand{\OneFile}{{\sc OneFile}}
\newcommand{\CXPUC}{{\sc CX-PUC}}
\newcommand{\CXPTM}{{\sc CX-PTM}}
\newcommand{\RedoOpt}{{\sc RedoOpt}}
\newcommand{\FHMP}{{\sc FHMP}}
\newcommand{\CAPSULES}{{\sc Capsules-Normal}}
\newcommand{\Romulus}{{\sc Romulus}}
\newtheorem{scenario}{Scenario}
\begin{document}
\title{Highly-Efficient Persistent FIFO Queues\thanks{Supported by the Hellenic Foundation for Research and Innovation (HFRI) under the ``Second Call for HFRI Research Projects to support Faculty Members and Researchers'' (project number: 3684).}}
%
%
\author{Panagiota Fatourou\inst{1,2} \and
Nikos Giachoudis\inst{1} \and
George Mallis\inst{1,2}}
\authorrunning{P. Fatourou et al.}
%
\institute{FORTH ICS, Greece
\email{\{faturu,ngiachou\}@ics.forth.gr} \and
University of Crete, Greece
\email{csd4165@csd.uoc.gr}}
\maketitle              
\begin{abstract}
In this paper, we study the question whether techniques employed, in a conventional system, 
by state-of-the-art concurrent algorithms to avoid contended hot spots are still efficient 
for recoverable computing in settings with Non-Volatile Memory (NVM). We focus on concurrent FIFO queues that have two end-points, 
head and tail, which are highly contended. 

We present a persistent FIFO queue implementation that performs a pair of persistence instructions 
per operation (enqueue or dequeue). The algorithm achieves to perform 
these instructions on variables of low contention by employing Fetch\&Increment and using 
the state-of-the-art queue implementation by Afek and Morrison (PPoPP'13). These result in performance 
that is up to 2x faster than state-of-the-art persistent FIFO queue implementations. 

\keywords{Non-volatile memory (NVM) \and NVM-based computing \and Persistence \and 
Recoverable algorithms \and Recoverable data structures \and Concurrent data structures \and FIFO queue \and Lock-freedom \and Persistence cost 
analysis.}
\end{abstract}

\section{Introduction}

Non-Volatile Memory (NVM) has been proposed 
as an emerging memory technology to provide the persistence 
capabilities of secondary storage at access speeds that are 
close to those of DRAM. 
In systems with NVM, concurrent algorithms can be designed to be {\em recoverable} (or {\em persistent}),
by persisting parts of the data they use. This enables
to restore their states after system-crash failures. 
In this direction, a vast amount of papers that appear in the literature
propose persistent implementations for a wide collection of concurrent data structures,
including stacks~\cite{FKK22,RA21}, queues~\cite{FKK22,FH+18,SP21}, heaps~\cite{FKK22}, and many others.
Moreover, a lot of papers have focused on designing general transformations~\cite{AB+22,NormOptQueue19,FKK22}
and universal constructions~\cite{CFR18,CFP20eurosys,FKK22,RC+19} that can be applied on many conventional concurrent data
structures 
to get their persistent analogs. (This list of references is by no means exhaustive.)

In this paper, we study the question whether techniques employed, in a conventional system,
by state-of-the-art concurrent algorithms, to avoid contended hot spots, 
are still efficient for persistent computing in settings with NVMs.
We focus on concurrent queues that 
have two end-points, head and tail, which are highly contended.
Accesses to the queue's endpoints result in bottlenecks, making most existing concurrent queue implementations non-scalable. 
Several papers have proposed techniques for reducing this cost. 
One approach uses software combining~\cite{FK12ppopp,FK11spaa,FK14,FK17,KSW18,HIST10}, a technique that attempts to 
reduce synchronization by having a thread, the {\em combiner}, to collect and apply
operations (of the same type) on each of the endpoints. Other papers~\cite{FK11spaa,FK14,FKR18,LCRQ13} 
use \emph{Fetch\&Increment} (\FAI) to avoid hotspots; \FAI\ is an \emph{atomic}
primitive which increases by one the value of a shared variable it takes as a parameter, and returns its previous
value.
Avoiding hotspots using \FAI, when designing a concurrent FIFO queue, resulted in \LCRQ,
the state-of-the-art concurrent queue implementation, for which it has been experimentally shown~\cite{LCRQ13} 
to significantly outperform the previously-proposed state-of-the-art algorithm~\cite{FK12ppopp}, 
which was based on software-combining.

The performance power of software combining in persistence has been studied in~\cite{FK12ppopp}. 
The combining-based persistent \FIFO\ queue implementation~\cite{FK12ppopp} has been shown to 
outperform all other persistent queue implementations,
specialized or not. As specialized algorithms take into consideration the semantics 
of the data structure to make design decisions in favor of performance, this was
kind of surprising. 
However, the specialized persistent queues that are proposed in the literature~\cite{FH+18,SP21} are not based 
on state-of-the-art queue implementations~\cite{LCRQ13,FKK22}.

In this paper, we examine whether the use of \FAI\ is a more promising approach
for designing persistent queues than employing combining.
Specifically, we focus on the state-of-the-art queue implementation (\LCRQ)~\cite{LCRQ13},
and study the correctness and performance implications of different persistence
choices to make it persistent. We end up with a persistent
FIFO queue implementation, called \PLCRQ, that performs much better
than any previous such implementation.

\LCRQ\ is inspired by the simple idea of using an infinite array, $\Q$ (which initially contains $\bot$), 
and two \FAI\ objects, \Tail\ and \Head\ (initially $0$).
An enqueuer executes a \FAI\ on \Tail\ to get an index $i$ of the array where the new item should be 
enqueued. 
Then, it performs a \Call{Get\&Set}{} to swap the element to be inserted with the current value of 
$\Q[i]$ (which in a successful enqueue it should be $\bot$). 
Similarly, a dequeuer executes \FAI\ on \Head, to get the array index  from where it will dequeue
an item by swapping the value of that element with $\top$. This way, each element of the array
has at most one enqueuer that tries to insert an element in it and at most one dequeuer that tries to 
dequeue from it. 
This simple algorithm, which we call \IQ, is not practical, as it requires an array of 
infinite size. Moreover, it may result in a livelock with an enqueuer and a dequeuer
always interfering with one another without ever making progress. 
\CRQ~\cite{LCRQ13} follows similar ideas as \IQ\ but it uses a circular array (of bounded size)
to store the new items. This introduces many more cases where synchronization 
is needed between enqueuers and dequeuers, but results in a more practical algorithm. To solve the live-lock 
problem (and the space limitation of the static array),
\LCRQ\ employs several \CRQ\ instances, connected in a linked list to form a queue
as in the well-known lock-free queue implementation in~\cite{MSQUEUE96}. 
(Section~\ref{sec:iq2lcrq} presents the details of \IQ, \CRQ\ and \LCRQ.)

Our effort focuses on making \LCRQ\ persistent with the least possible {\em persistence cost} (Section~\ref{sec:main}),
i.e. without paying a high performance overhead to support persistence, in periods of time where no failures occur.  
A shared variable can be persisted by executing a pair of persistence instructions: \pwbi, which requests 
the system to flush back the current value of a shared variable to NVM and works in a non-blocking way, 
and \psynci, that blocks until flushing has been completed for all preceeding \pwbi s. 
Persistence can easily be supported on top of an existing concurrent algorithm by persisting all shared variables~\cite{DBLP:conf/wdag/IzraelevitzMS16}
every time they are updated. 
However, this technique would have excessive performance overhead. 
On the other hand, the simple approach of persisting only \Head\ (\Tail) every time a \FAI\ is executed on it, 
results also in excessive performance cost as it violates two major persistence principles
crucial for performance~\cite{AB+22}: a) it executes many persistence instructions per operation, 
and b) it applies persistence instructions on highly contended shared variables accessed by all threads.

We focused first on achieving persistence with a single pair of persistence instructions per operation, which is optimal. 
Starting from \IQ, we managed to persist only once per operation.  
This required careful design of the algorithms' {\em recovery functions} (the recovery function of an implementation
is executed by the system, after a crash, to bring the data structure in a coherent state). 
It also required  long argumentation to prove that the proposed
algorithms are indeed correct (i.e., they satisfy {\em durable linearizability}~\cite{IMS16}). 
We consider these to be main contributions of the paper, as the persistence part of the algorithms
themselves (without considering the recovery functions) had to be maintained as simple and minimal as possible for achieving our performance goals. 
We next observed that even when we persist only once, we cannot beat the state-of-the-art combining based approaches~\cite{FKK22},
if persistence instructions are executed on \Head\ and \Tail\ that are highly contended. So,  we came up with persistent variants of the algorithms, 
where an enqueue operation $op$ persists only the last value it writes into $Q$. Moreover, for 
\IQ, we came up with \PIQ, where dequeue operations persist also only the last value they write into $Q$. As each position of the array 
is accessed by only two threads, both persistence principles mentioned above are respected by \PIQ.
This results in low performance overhead. 

Respecting the persistence principles~\cite{AB+22}, when designing \PCRQ\ (the persistent version of \CRQ) was more challenging. 
A position $i$ of $Q$ can now be accessed by any operation that reads
$i + kR$ in \Tail\ (or \Head), where $k \geq 0$ is an integer and $R$ is the size of the circular array. 
This leads to the necessity to persist \Head\
every time an item is dequeued to ensure durable linearizability. However, as explained previously,
doing so has considerable performance overhead. To avoid this cost, we introduce the
following technique. Each thread maintains a local copy of \Head. Every time it 
is about to complete a dequeue operation, it persists the local copy of \Head\
(instead of its shared version, on which \FAI\ is executed to get an index in $Q$).
Since the local copies of \Head\ are single-writer, single-reader variables,
this significantly reduces the cost of persistence (Section~\ref{sec:evaluation}, Figures~\ref{fig:plcrq-comp},~\ref{fig:over-example}).
We believe that this technique is of independent interest and can be used to get
persistent versions of many other state-of-the-art concurrent algorithms. 

Our experimental analysis (Section~\ref{sec:evaluation}) shows that \PLCRQ\ is at least 2x faster than its best competitor, \PBqueue~\cite{FKK22}.
We also provide experiments to support all our claims above.  

Performing a single pair of persisting instructions per operation in \PIQ\ introduces some recovery cost.
(The {\em recovery cost} is the time needed to execute the recovery function.)
The {\em recovery function} of \PIQ\
has to scan $Q$ until it finds a streak of $n$ subsequent $\bot$ values. This might end up to be expensive,
as bigger parts of $Q$ has to be scanned as the time is progressing. On the other hand,
we may choose to trade performance at normal execution time (where no failures occur)
by periodically persisting \Head\ and \Tail. This illustrates an interesting tradeoff between 
the performance at normal execution time and the recovery cost: better performance at normal execution time results in higher cost at recovery
(and vice versa).  It is an interesting problem to study whether
this trade-off appears when designing the persistence versions of other state-of-the-art techniques for
ensuring synchronization and for designing concurrent data structures in a conventional (non-NVM) setting.
We believe that this trade-off is inherent in many cases.

We provide experiments  (Section~\ref{sec:evaluation}) 
to simulate failures which allows us to measure the recovery cost. 
To the best of our knowledge, none of the papers on persistent data structures in shared memory systems
provide experiments to measure the recovery cost. 

Summarizing, the main contributions of this paper are the following: \\
{\bf (1)} We present \PLCRQ, a persistent implementation of a \FIFO\ queue that 
is much faster than existing state-of-the-art persistent queue implementations. \PLCRQ\
promotes the idea of starting from the state-of-the-art implementation of a concurrent data structure (DS) to come up with an efficient persistent implementation of it. 
This departs from the approach followed in previous specialized persistent FIFO queue implementations~\cite{FH+18,SP21}.\\
{\bf (2)} We illustrate an interesting tradeoff between the persistence cost of an algorithm at normal execution time and its recovery cost.  \\
{\bf (3)} We provide a framework to simulate failures and measure the recovery cost. \\
{\bf (4)} Our experimental analysis shows that a) \PLCRQ\ is at least two times faster than its competitors, 
and that b) respecting the persistence principles presented in~\cite{AB+22} is indeed crucial for performance. \\
{\bf (5)}  Based on our experiments, we propose a new technique for reducing the persistence cost,
namely using local copies of highly-contended shared variables by each thread and persisting them instead.  
This technique
transfers part of the persistence cost from normal execution time to recovery, and is of independent interest. 


\section{Preliminaries}
We consider a system where $n$ asynchronous threads run concurrently and communicate using shared objects.
In addition to read/write objects, that support atomic {\em reads} and {\em writes} on/to the state of the object,
the system also provides the following {\em atomic} primitives: 
a) \Call{Fetch\&Increment}{$V$} (\FAI{}), which adds $1$ to $V$'s current value  and returns the value that $V$ 
had before the increment, b) \Call{Get\&Set}{$V$, $v'$}, which stores value $v'$ into $V$ 
and returns $V$'s previous value, c) \Call{Compare\&Swap}{$V, v_{old}, v_{new}$} (\CAS{}), 
which checks the value of $V$ and if it
is $v_{old}$, it changes its value to $v_{new}$ and returns \True{}, otherwise it makes no change
and returns \False{}. A variant of \CAS\ is 
\Call{CAS2}{$V, ( v_0^{old}, v_1^{old}), ( v_0^{new}, v_1^{new})$}, which operates atomically 
on an array $V$ of two elements. Specifically, it checks if
$V[0] == v_0^{old} \land V[1] == v_1^{old}$ and if this is true it changes the value of $V$
to $ \{ v_0^{new}, v_1^{new} \}$ and returns \True{}; otherwise it returns \False{}
without changing $V$.
Finally, \Call{Test\&Set}{$B$} changes the value of a shared bit $B$ to $1$ and returns its
value before the change; it comes together with  \Call{Reset}{$B$}, which resets the value of $B$ to $0$.
In the pseudocodes in later sections, the names of shared objects start with a capital letter. 
We assume the \emph{Total Store Order} (\emph{TSO}) model,
where writes become visible in program order.

The system's memory is comprised of a non-volatile part and its volatile components
(registers, caches, DRAM). 
We study persistence under the {\em full-system crash failure} (or {\em system-wide crash failure}) model
\cite{izraelevitz2016linearizability}.
When a failure occurs, the values of all variables 
stored in volatile memory are lost and are reset to their initial values at recovery time.
On the contrary, any data that was written back ({\em persisted}) to NVM persist (i.e., they are available
at recovery time).

We assume {\em explicit epoch persistency} \cite{izraelevitz2016linearizability}, where a 
write-back to the persistent memory is initiated through a specific instruction called {\em persistent 
write-back} (\pwbi{}); \pwbi{} is asynchronous, i.e., the order in which \pwbi{}s take effect 
may not be preserved inherently. To enforce ordering when necessary, a \pfencei{} instruction  
ensures that all \pwbi{} instructions performed before the \pfencei{}, are realized before those 
that follow it. However, \pfencei{} is also asynchronous, so it does not provide any guarantee that 
the \pwbi{}s that preceded it have occured by the time of a crash.  
A \psynci{} instruction causes a thread to block until all prior \pwbi{} 
instructions have been realized (i.e., the data they flush have indeed been written back to the NVM). 
We collectively refer to \pwbi{}, \pfencei{}, and \psynci{} as the set of {\em persistence 
instructions}. The {\em persistence cost} of an operation's instance (an execution) 
is the time overhead incurred from executing its persistence instructions. 

We assume that a recoverable implementation has 
an associated {\em recovery function} (for the system as a whole). After a system failure, the recovery 
function is invoked to bring the system in a consistent state. Then, the system may (asynchronously) recover 
failed threads, which start by invoking new operations. 
The {\em recovery cost} of an implementation is the 
time needed to execute its recovery function. We usually measure it by taking the average among 
several runs, in each of which the recovery function is executed at the end 
after a system-wide failure (see details in Section~\ref{sec:evaluation}).

Consider an execution $\alpha$, and let $c_1, c_2, \ldots$ be crash events in $\alpha$, in order. 
We divide $\alpha$ into {\em epochs} $E_1, E_2, \ldots$.
The first epoch starts at $C_0$ and ends with $c_1$, if $C_1$ exists,
otherwise $E_1$ is the entire execution. 
For each $k > 0$, epoch $E_k$ starts with the event following $c_{k-1}$ and ends with $c_k$, if 
$c_k$ exists in $E_k$. 
If $c_k$ does not exist, $E_k$ is the suffix of $\alpha$ after $c_{k-1}$.

An operation $op$ has an {\em invocation} and may have a {\em response}; $op$'s  
{\em execution interval} starts with its invocation and ends with its response (if it exists)
or it is the suffix of an execution starting with its invocation (otherwise). 
An implementation is {\em lock-free}, if in every infinite execution it produces, 
it holds that if the number of failures is finite, an infinite number of operations respond.

In a conventional setting where threads never recover after a crash, an execution $\alpha$ is 
{\em linearizable}~\cite{10.1145/78969.78972}
if there exists a sequential execution $\sigma$ which contains all completed
operations in $\alpha$ (and possibly some of the uncompleted ones), such that 
a) the response of every operation in $\alpha$ is the same as that of the corresponding
operation in $\sigma$, and b) the order of operations in $\sigma$ respects the
partial order induced by the execution intervals of operations in $\alpha$.  
An algorithm is {\em linearizable}, if all the executions it produces are linearizable. 

We now consider a setting where failed threads may recover. Briefly, an execution is {\em durably 
linearizable}~\cite{izraelevitz2016linearizability}, 
if the state of the object after every crash reflects all the changes performed by the operations
that have completed by the time of the crash (and possibly by some of the uncompleted operations). 
Our goal is to design durably linearizable implementations of FIFO queues with low persistence cost.

\section{LCRQ: A Brief Description} \label{sec:iq2lcrq}
\LCRQ~\cite{LCRQ13} uses multiple instances of a static circular FIFO queue implementation, called \CRQ.
\CRQ\ is based on a simple but impractical algorithm, which we call \IQ\ and we discuss it first.

\noindent
{\bf The \IQ\ algorithm.}
In \IQ\ (black lines in Algorithm~\ref{alg:piq}), 
the queue is represented as an array ($\Q$) of infinite size, initialized with $\bot$ in all of its cells, 
and two variables \Head{} and \Tail{}, initially 0.
\Head{} stores the index of $Q$'s item that is going to be dequeued next, 
whereas \Tail{} stores the index of the next available $Q$'s cell to store a new item.

Enqueue uses \FAI{} to read \Tail{} 
and get the current available index of $Q$. It then attempts to store the new item there
using \Call{Get\&Set}{}. If the  \Call{Get\&Set}{} returns $\bot$, the enqueue is successful.
Otherwise, it re-tries by applying the steps described above. 
A dequeue uses \FAI{} to identify the next item to be removed by reading \Head. 
It then attempts to read that item
and replace it with $\top$ by using \Call{Get\&Set}{}. If \Call{Get\&Set}{}
returns a 
not $\bot$ item, the dequeue is successful and returns it. 
If \Call{Get\&Set}{} returns $\bot$ and $\Head > \Tail$,
the dequeue returns \EMPTY{} (i.e., the queue is empty). 
Otherwise, these steps are repeated until either some non-$\bot$ value or \EMPTY{} is returned.

The \IQ{} algorithm is very simple but unrealistic, because of the infinite table it uses. 
Another issue of the \IQ{} algorithm is that  
 if a dequeue operation continuously swaps $\top$ in cells that an enqueue 
operation is trying to access, 
we would have a {\em livelock}. 

\noindent
{\bf The \CRQ\ algorithm.}
To address the infinite array issue,  
\CRQ{} (see black lines of Algorithm~\ref{alg:pcrq}) implements $\Q{}$ as a circular array, of size $R$.  
As in \IQ{}, \Head{} and \Tail{} are implemented as \FAI{} objects, whose values 
are incremented indefinitely. Each enqueue (dequeue) reads the value $i$ in \Tail{} (\Head).
Consider an enqueue (dequeue) operation  that reads index $i$ by executing \FAI\ on \Tail\ (\Head) 
and stores (removes) an element
in (from) $\Q[i \bmod R]$.  We say that the {\em index} of the operation is $i$. A pair of enqueue and dequeue operations
that both have index $i$ are called {\em matching operations}, and we denote them by $\var{enq}_i$ and $\var{deq}_i$.

Every cell in the array now stores a $3$-tuple object $(s, i, v)$, containing the {\em safe bit}, an {\em index}, and the {\em value} 
of the item it stores; its initial value is $(1, u, \bot)$,  where 
$u$ is the actual index of each cell of the array. 
If the node's value is $\bot$ 
then that node is \emph{unoccupied}, otherwise it is \emph{occupied}. 
An enqueue with index $i$ and value $v$ tries to store the triplet $( 1, v, i )$ in
$\Q[i \bmod R]$, if this array cell is unoccupied.
\CRQ\ ensures that only the dequeue with index $i$ can dequeue this item. 

The use of a circular array results in several synchronization challenges between enqueuers and dequeuers.
Specifically, 
\CRQ\ copes with the following cases: a) a dequeue $\var{deq}$ 
finds an element of $\Q$ it accesses to store an index less than or equal to the index it uses 
and be unoccupied ({\em empty transition}), and b) $\var{deq}$ finds $\Q$'s element occupied by an item 
that an enqueue other than its matching one has inserted ({\em unsafe transition}).
A {\em transition} is a change of the value of an array cell. A dequeue operation 
$\var{deq}$ performs a {\em dequeue transition} if it successfully executes the \Call{CAS2}{} in line~\ref{pcrq:dqtran},
changing the value of item $\Q[t \bmod R]$ from 
$(s, t, v)$ to $(s, t + R, \bot)$. 
In order to have that transition, $t$ should match the index $i$ of the cell, 
and the cell must be also occupied (line~\ref{pcrq:dqemptychk}). 
$\var{deq}$ executes an {\em empty transition} if it arrives before the enqueue that gets index $t$
stores its item in the cell, and finds the cell unoccupied.
Then, the \Call{CAS2}{} is called (line~\ref{pcrq:empty-1}) to change the index of that cell to
$t + R$.
This 
prevents an $\var{enq}$ that reads $i$ in \Tail\
to add its item in this array element. This is needed to 
ensure linearizability, 
as there will be no dequeuer to dequeue this item (a dequeuer has already {\em exhausted} index $i$).
Finally, $\var{deq}$ performs an {\em unsafe transition}, if it  
arrives while the cell is occupied with index $t-kR$, where $k>0$. 
Then, $\var{deq}$ executes the \Call{CAS2}{} of line~\ref{pcrq:unsafe} to change the safe bit to $0$.
This way it informs
later enqueuers to be careful, as the items they want to store in this position may never 
be removed. An enqueuer has to check whether $\Head\ > i$ and if this is so, the dequeuer 
may have already passed from this position, so they try the insertion with the next available index. 

Before attempting its insertion, a thread executing an \Call{Enqueue}{\crqV, $x$}, that reads $t$ in \Tail\ 
(line~\ref{pcrq:eqfaa}), checks that: a) the index 
$i$ of cell $\Q[t \bmod R]$ is at most as big as the value $t$ (line~\ref{pcrq:eqcas})
(otherwise, the enqueue operation tries the next potentially available cell, as another operation
with index higher than $t$ has already processed this item),
b) the safe bit $s$ is equal to $1$ or \Head\ is at most $t$
(line~\ref{pcrq:eqcas})  (the second condition ensures that the dequeue with index $t$ 
has not started its execution yet, so the signal through the safe bit does not concern it). 
An {\em enqueue transition} succeeds when the 
\Call{CAS2}{} operation in line~\ref{pcrq:eqcas} succeeds. 

Linearizability of \IQ\ and \CRQ\ are discussed in~\cite{LCRQ13}.
\CRQ\ implements a {\em tantrum} queue, i.e. a FIFO queue with relaxed semantics: 
any enqueue operation may return a special value \CLOSED\ (which identifies that 
the array is full or that it faces a livelock when trying to insert a new element in it); 
as soon as an enqueue returns \CLOSED\ on an instance of \CRQ, 
all other enqueue operations on the same \CRQ\ instance should also return \CLOSED.

\noindent
{\bf The LCRQ implementation.}
The main idea of \LCRQ{} (see black lines of Algorithm~\ref{alg:plcrq}) is to use a linked list of nodes, each representing an instance of \CRQ.
The linked list together with two pointers (\First\ and \Last) to its endpoints, 
implement a FIFO queue, which closely follows 
the lock-free \FIFO\ queue implementation by Michael and Scott~\cite{MSQUEUE96}.  
When an enqueue on the current instance of \CRQ\ returns \CLOSED,
it creates a new \CRQ\ instance containing the item to be inserted and appends
it to the list. 
When a dequeue on the \CRQ\ instance pointed by \First, returns \EMPTY, it dequeues this node 
by moving \First\ to point to the next \CRQ\ node in the list.
\LCRQ\ is a linearizable implementation of a FIFO queue~\cite{LCRQ13}. 


\section{Persistent FIFO Queue Algorithm} \label{sec:main}

\subsection{Persistent IQ} \label{sec:piq}

\begin{algorithm}[t]
\scriptsize
\caption{Persistent \IQ\ (\PIQ). Ignore code in blue to get the pseudocode for \IQ.}\label{alg:piq}
\begin{multicols}{2}
\begin{algorithmic}[1]
\Function{Enqueue}{$\itemX{}$: Item}
    \While{\True{}} \label{piq:loop}
        \State $\localTail{}\gets \Call{Fetch\&Increment}{\Tail{}}$\label{piq:faa}
        \If{$\Call{Get\&Set}{\Q{}[\localTail{}], \itemX{}}==\bot$}\label{piq:swap}
            \State \recoverycode{\pwb{$\Q{}[\localTail{}]$}}; \recoverycode{\psync{}} \label{piq:eqpwb}
            \State \Return \OK{}
        \EndIf
    \EndWhile
\EndFunction
\medskip
\Function{Dequeue}{}
    \While{\True{}} \label{piq:dqloop}
        \State $\localHead{}\gets \Call{Fetch\&Increment}{\Head{}}$\label{piq:dqfaa}
        \State $\itemX{}\gets \Call{Get\&Set}{\Q{}[\localHead{}], \top}$\label{piq:dqswap}
        \If{$\itemX{}\neq \bot$} \label{piq:botcheck}
            \State \recoverycode{\pwb{$\Q{}[\localHead{}]$}}; \recoverycode{\psync{}} \label{piq:dqpwb-1}
            \State \Return $\itemX{}$
        \EndIf
        \If{$\Tail{}\leq \localHead{}+1$}\label{piq:dqempty}
            \State \recoverycode{\pwb{$\Q{}[\localHead{}]$};} \recoverycode{\psync{}} \label{piq:dqpwb-2}
            \State \Return \EMPTY{}
        \EndIf
    \EndWhile
\EndFunction
\medskip
\recoverycode{
\Function{Recovery}{}
    \State $\cntb{} = 0$
    \While{$\cntb{} < n$}
        \If{$\Q{}[\Tail{}] == \bot$}
            $\cntb{} = \cntb{} + 1$
        \Else 
            \mbox{ $\cntb{} = 0$}
        \EndIf
        \State $\Tail{}\gets \Tail{} + 1$
    \EndWhile
    \State $\Tail{}\gets \Tail{} - n + 1$
    \State $\Head{}\gets \Tail{}$ 
    \While{$\Q{}[\Head{}]\neq \top$ \textbf{and} $\Head{}\geq 0$}
        \State $\Head{}\gets \Head{} - 1$
    \EndWhile
    \State $\Head{}\gets \Head{} + 1$
\EndFunction
}
\end{algorithmic}
\end{multicols}
\end{algorithm}

\PIQ\ (Algorithm~\ref{alg:piq}) works like \IQ\ and 
persists just the last write into $\Q$ of each operation. 
This is optimal, but also it ensures low persistence cost
due to low contention~\cite{AB+22}, as \IQ\ ensures that 
each element of $\Q$ is accessed by at most two threads. 
Note that persisting \Head\ and \Tail, instead, could also work, but it would be more
expensive in terms of performance, since these variables are accessed by all threads
and thus they are highly contended
(see Section~\ref{sec:evaluation}, Figure~\ref{fig:piq-tail-no-tail}).

On the other hand, avoiding writing back \Head\ and \Tail, adds complexity, 
and thus also performance overhead, to the recovery function.
To recover \Tail, \PIQ\ searches the array for the first continuous streak of $n$ unoccupied cells, where 
$n$ is the total number of threads in the system.
Note that when a crash happens, some of the active enqueuers may already have inserted their items into $\Q$ (and 
their write-backs to NVM has taken effect) but others might not.
However, all the unoccupied positions between two occupied
are taken from enqueue operations that are active at the time of the crash. 
Since there are $n$ processes running concurrently, it is easy to see that $l < n$.
Thus, there are at most $n-1$ unoccupied cells between two occupied.
Because of this, when the recovery function in \PIQ\ finds the first streak of $n$ unoccupied cells in $\Q$, 
it sets \Tail\ to the first cell of that streak.

To find \Head, \PIQ\ (Algorithm~\ref{alg:piq}) starts from \Tail\ and traverses towards the beginning of $\Q$ until it first 
sees the value $\top$. \Head\ is set to point to the element on the right of this $\top$ in $\Q$
(or to $0$ if such an element does not exist).
This way, there does not exist an element with the value $\top$ between \Head\ and \Tail. 

We now discuss why \PIQ\ ensures durable linearizability. 
\noindent
We denote by $\var{enq}_j$ the enqueue operation that reads $j$ in \Tail\ (line~\ref{piq:faa}) 
in its last iteration of the while loop (line~\ref{piq:loop}) 
and successfully inserts an item $x_j$ in position $\Q[j]$. 
A dequeue operation {\em matches} $\var{enq}_j$ if it reads $j$ in \Head\ (line~\ref{piq:dqfaa}) 
and 
replaces $x_j$ with $\top$ in $\Q[j]$ (line~\ref{piq:dqswap}) .
We denote by $\var{deq}_j$ the matching dequeue operation of $\var{enq}_j$.
Any dequeue that reads $j' > j$ in \Head\ (line~\ref{piq:dqfaa}) 
and 
replaces a value $\not\in \{ \top, \bot\}$ with $\top$ in $\Q[j']$ (line~\ref{piq:dqswap}) is 
called a {\em following} dequeue to $\var{deq}_j$.
An operation {\em belongs to} an epoch $E_k$, 
if it performs a \FAI\ in $E_k$. 
A dequeue operation is {\em successful} if 
it replaces a value $x \not\in \{ \top, \bot\}$ with $\top$ in $\Q{}$ (line~\ref{piq:dqswap}).
We say that an enqueue operation is {\em persisted} if it has executed the \Call{Get\&Set}{} 
(line~\ref{piq:swap}), the return value is $\bot$, and the write-back of the new value has taken 
effect.
A dequeue operation is {\em persisted}, if it has executed the \Call{Get\&Set}{} 
(line~\ref{piq:dqswap}), read a value $x\neq \bot$, and the write-back of $\top$ has taken effect.
Persisted operations terminate if the system does not crash. 

We linearize an enqueue $\var{enq}$ that belongs to an epoch $E_k$, if either
$\var{enq}$ is persisted in $E_k$, or 
there exists a matching dequeue $\var{deq}$ to $\var{enq}$ which is persisted in $E_k$; 
$enq$ is linearized at the last \FAI\ it executes. 
We linearize a dequeue $\var{deq}$ of $E_k$, if either
it is persisted in $E_k$, or
there exist a matching enqueue to $\var{deq}$ that has been included in the linearization
and a following dequeue to $\var{deq}$ of $E_k$ is persisted.
If a dequeue returns \EMPTY, it is linearized when it reads \Tail\ on line~\ref{piq:dqempty}.
The linearization points of successful 
dequeues are assigned 
in the same order as the linearization points of their matching enqueues, so that the FIFO property holds.
Algorithm~\ref{alg:IQlin} assigns linearization points to \PIQ\ in a formal (algorithmic) way.
  
\begin{algorithm}[t]
\scriptsize
\caption{\PIQ\ linearization procedure} \label{alg:IQlin}
\begin{algorithmic}[1]
\Statex Execution $\alpha$ and set of epochs $\mathcal{E}\coloneqq \{E_1, E_2, \ldots\}$ in $\alpha$
\State \LinQ{} auxiliary infinite array
\State $head(\LinQ{})$ accessor to head index of \LinQ{}
\State $tail(\LinQ{})$ accessor to tail index of \LinQ{}
\Statex

\For{each epoch $k=1,2,\ldots$}
	\For{each event $j=1,2,\ldots$ in epoch $E_k$} \label{IQlin:eventloop}
		\If{$e_j = \langle \localTail{}\gets \Call{Fetch\&Increment}{\Tail{}}\rangle$} \label{IQlin:eqfai}
			\State Let $\var{enq}(x)$ be the enqueue that executes $e_j$
			\If{$e_j$ is the last \FAI{} of $\var{enq}(x)$ and $\var{enq}(x)$ is persisted\\
				\hspace*{1em}\textbf{or} $\exists \text{ matching deque } \in E_k$ for $\var{enq}(x)$, which is persisted} 
					\State $\LinQ{}[\localTail{}]\gets \itemX{}$
					\State $tail(\LinQ{})\gets \localTail{} + 1$
					\State Linearize that enqueue operation at $e_j$ \label{IQlin:eqlin}
			\EndIf
		\ElsIf{$e_j$ is a \Tail{} read by $\langle deq\colon per\rangle$ such that\\
				\hspace*{0.5em} $\Tail{}\leq \Head{}$ at the time of the read}
			\State Linearize that dequeue operation at $e_j$
		\EndIf

		\While{$head(\LinQ{}) < tail(\LinQ{})$}
			\State $\localHead{}\gets \min \{i\colon \LinQ{}[i]\neq \bot\}$
			\State Let $\var{deq}()$ be the dequeue whose \FAI{} returns $\localHead{}$
			\If{$\var{deq}()$ not active in $e_1, \ldots, e_j$ of $E_k$}
				\State \textbf{break}
			\EndIf
			\If{$\var{deq}()$ is persisted \textbf{or} $\exists$ a following persisted dequeue to $\var{deq}()$ in $E_k$}
				\State $\itemX{}\gets \LinQ{}[\localHead{}]$
				\State $\LinQ{}[\localHead{}]\gets \bot$
				\State $head(\LinQ{})\gets \localHead{} + 1$
				\State Linearize that dequeue operation at $e_j$ \label{IQlin:dqlin}
			\EndIf
		\EndWhile
	\EndFor
\EndFor
\end{algorithmic}
\end{algorithm}

Roughly speaking, all enqueue and dequeue operations that have been persisted by the time of the 
crash are linearized.
This is needed to ensure durable linearizability, as such operations may have also returned.  
We linearize all active 
non-persisted enqueue operations
that have matching dequeue operations which are persisted. Since all persisted dequeue operations 
are linearized, linearizing also their matching enqueues (even if they are not persisted) is necessary for correctness. 
We linearize an 
active non-persisted dequeue $\var{deq}$ 
only if there is a matching enqueue operation $\var{enq}$ that has been included in the linearization
and a following dequeue $\var{deq}'$ to $\var{deq}$ that is persisted in $E_k$ (and thus it is linearized). 
This is necessary because in the absence of the linearization of $\var{deq}$, 
the  linearization of $\var{deq}'$ would violate the FIFO property of the queue (given that 
$\var{enq}$ has been linearized). 
Note that if we do not assign a linearization point to an enqueue and its 
matching dequeue, the FIFO property of the rest operations is not violated. 
So, we do not linearize $\var{deq}$, if there is no following dequeue that is linearized in $E_k$.

\subsection{Persistent CRQ} \label{sec:pcrq}
%

\begin{algorithm}[t]
\scriptsize
\caption{\PCRQ, code for thread $p_i$.
Ignore code in blue to get the pseudocode for \CRQ.
Instances of $\Q$, \Head, and \Tail\ refer to $crq.\Q$, $crq.\Head$ and $crq.\Tail$.
} \label{alg:pcrq}
\begin{multicols}{2}
\begin{algorithmic}[1]
\Function{Enqueue}{\crqV, $\itemX$}
    \State $\var{closedFlag}:$ integer, initially 0
    \While{\True{}} \label{pcrq:eqloop}
        \State $( \var{cb}, \localTail) \gets \Call{Fetch\&Increment}{\Tail}$ \label{pcrq:eqfaa}
        \If{$\var{cb} == 1$}    \Comment{closed bit is set}
            \If{\recoverycode{$\var{closedFlag} == 0$}}
                \State \recoverycode{\pwb{\Tail}; \psync{}} \label{pcrq:closed-pwb-1}
                \State \recoverycode{$\var{closedFlag} \gets 1$}
            \EndIf
            \State \Return \CLOSED{} \label{pcrq:eqclosed-1}
        \EndIf
        \State $e\gets \Q[\localTail\bmod R]$ \label{pcrq:eqr1} \Comment{get item}
        \State $v\gets e.\var{val}$ \label{pcrq:eqr2} \Comment{get item's value}
        \State $( s,i) \gets ( e.s,e.\var{idx}) $ \label{pcrq:eqr3} \Comment{get safe bit and
            index}
        \If{$v == \bot$} \label{pcrq:eqbotchk}
            \If{$i\leq \localTail$ \textbf{and} ($s == 1$ \textbf{or} $\Head \leq \localTail$) \textbf{and} \\
                \hspace*{0.2cm} \Call{CAS2}{$\Q[\localTail\bmod R],( s,i,\bot) ,( 1,\localTail{},\itemX{}) $}} \label{pcrq:eqcas}
                \State \recoverycode{\pwb{$\Q[\localTail\bmod R]$};} \recoverycode{\psync{}} \label{pcrq:eqpwb}
                \State \Return \OK{} \label{pcrq:eqretOK}
            \EndIf
        \EndIf
        \State $\localHead{}\gets \Head{}$ \Comment{read \Head{}} \label{pcrq:headread}
        \If{$\localTail{}-\localHead{}\geq R$ \textbf{or} \Call{starving}{}}\label{pcrq:eqclosedchk}
            \State \Call{Test\&Set}{$\Tail{}.\var{cb}$} \label{pcrq:eqts}
            \State \recoverycode{\pwb{\Tail};} \recoverycode{\psync{}} \label{pcrq:closed-pwb-2}
            \State \recoverycode{$\var{closedFlag} \gets 1$}
            \State \Return \CLOSED{} \label{pcrq:eqclosedret}
        \EndIf
    \EndWhile
\EndFunction
\medskip
%
\Function{Dequeue}{\crqV}
    \While{\True{}} \label{pcrq:dqouterloop}
        \State $\localHead \gets \Call{Fetch\&Increment}{\Head}$ \label{pcrq:dqfaa}
        \State $\Head_i \gets \localHead + 1$
        \State $e\gets \Q[\localHead\bmod R]$ \label{pcrq:dqr1} \Comment{read item}
        \While{\True{}}
            \State $v\gets e.v$ \label{pcrq:dqr2} \Comment{read item's value}
            \State $( s,i) \gets ( e.s,e.\var{idx})$ \label{pcrq:dqr3}
            \If{$i>\localHead{}$} \label{pcrq:dqichk}
                \textbf{goto} line \ref{pcrq:fix} \label{pcrq:enddqr2}
            \EndIf
            \If{$v\neq \bot$} \label{pcrq:dqemptychk}
                \If{$i == \localHead{}$} \label{pcrq:dqhchk}
                    \If{\Call{CAS2}{$\Q[\localHead\bmod R]$,$( s,\localHead,v)$,$( s,
                    \localHead{}+R,\bot)$}} \Comment{dequeue transition} \label{pcrq:dqtran}
                        \State \recoverycode{\pwb{$\Head_i$};} \recoverycode{\psync{}} \label{pcrq:dqpwbHead}
                        \State \Return $v$ \label{pcrq:dqret}
                    \EndIf
                \Else
                    \If{\Call{CAS2}{$\Q[\localHead\bmod R],( s,i,v) ,( 0,i,v)$}}
                    \label{pcrq:unsafe} \Comment{unsafe transition}
                        \State \textbf{goto} line \ref{pcrq:fix} \label{pcrq:endunsafe}
                    \EndIf
                \EndIf
            \Else
                \If{\Call{CAS2}{$\Q[\localHead\bmod R],( s,i,\bot) ,( s,\localHead+R,
                \bot)$}} \Comment{empty transition} \label{pcrq:empty-1}
                    \State \textbf{goto} line \ref{pcrq:fix} \label{pcrq:endempty}
                \EndIf
            \EndIf
        \EndWhile
        \State $( \var{cb},\localTail{}) \gets$ \Tail \label{pcrq:fix}
        \If{$\localTail{}\leq \localHead{}+1$} \label{pcrq:empty}
            \State \recoverycode{\pwb{$\Head_i$};} \recoverycode{\psync{}} \label{pcrq:dqpwbHead2}
            \State \Call{FixState}{\crqV}
            \State \Return \EMPTY \label{pcrq:endfix}
        \EndIf
    \EndWhile
\EndFunction
\medskip
%
\Function{FixState}{crq}
\While{True}
    \State $h\gets$ \Call{Fetch\&Add}{crq.head, 0}
    \State $t\gets$ \Call{Fetch\&Add}{crq.tail, 0}
    \If{$\text{crq.tail}\neq t$}
        \State \textbf{continue}
    \EndIf
    \If{$h\leq t$}
        \State \Return
    \EndIf
    \If{$\Call{CAS}{\text{crq.tail}, t, h}$}
        \State \Return
    \EndIf
\EndWhile
\EndFunction
\medskip
%
\recoverycode{
\Function{Recovery}{}
\State $R\colon$ size of circular array
\State $\Head \gets \max_{i=1\ldots n} \Head_i$
\State $( \var{cb}, \localTail) \gets \Tail$ \label{pcrq:rec-tail-read}
\State $\Tail\gets ( \var{cb}, 0 )$
\For{$i\gets 0;\ i < R;\ i\texttt{++}$} \label{pcrq:rec-tail-loop}
    \If{$\Q[i].\var{val} \neq \bot$
	    \textbf{and} $\Tail < \Q[i].\var{idx} + 1$} \label{pcrq:rec-tail-occupiedchk}
	        \State $\Tail\gets \Q[i].\var{idx} + 1$ \label{pcrq:rec-tail-set2}
    \ElsIf{$\Q[i].\var{val} == \bot$
    \textbf{and} $\Q[i].\var{idx} \geq R$} \label{pcrq:rec-tail-emptychk}
        \If{$\Tail < \Q[i].\var{idx} - R + 1$}
            \State $\Tail\gets \Q[i].\var{idx} - R + 1$ \label{pcrq:rec-tail-set1}
        \EndIf
    \EndIf
\EndFor
\If{$\Head>\Tail$} $\Tail\gets \Head$ \Comment{Empty queue} \label{pcrq:recemptycheck}
\Else \Comment{$\Head\leq \Tail$}
    \State $\var{max} \gets \Head$ \label{pcrq:recinimax}
    \For{$i\gets \Head$;\ $i \bmod R \neq \Tail \bmod R$;\ $i\texttt{++}$} \label{pcrq:recmaxfor}
        \If{$\Q[i \bmod R].\var{val} == \bot$ \textbf{and}\\
        \hspace*{0.5em} $\Q[i \bmod R].\var{idx} - R > \var{max}$}
        \label{pcrq:recgtmax}
            \State $\var{max}\gets \Q[i \bmod R].\var{idx} - R + 1$
            \label{pcrq:recsetmax}
        \EndIf
    \EndFor
    \State $\Head \gets \var{max}$ \label{pcrq:rec-max-found}
    \State $\var{min} \gets \Tail$ \label{pcrq:rec-min-ini}
    \For{$i\gets \Head$;\ $i \bmod R \neq \Tail \bmod R$;\ $i\texttt{++}$}
        \If{$\Q[i \bmod R].\var{val} \neq \bot$
        \textbf{and} $\Q[i \bmod R].\var{idx} < \var{min}$
        \textbf{and} $\Q[i \bmod R].\var{idx} \geq \Head{}$}
        \label{pcrq:recgteHead}
            \State $\var{min}\gets \Q[i \bmod R].\var{idx}$
        \EndIf
    \EndFor
    \If{$\var{min} < \Tail$} $\Head \gets \var{min}$ \label{pcrq:rec-min-found} \EndIf
\EndIf
\For{$i = \Head - 1;\ i \bmod R \neq \Tail \bmod R;\ i\texttt{--}$} \label{pcrq:resetloop}
    \State $\Q[i \bmod R] \gets ( 1, i+R, \bot)$ \label{pcrq:resetnodes}
\EndFor
\For{$i = 0;\ i < R;\ i\gets i + 1$} $\Q[i].s \gets 1$ \label{pcrq:resetsfloop1} \EndFor
\EndFunction
}
\end{algorithmic}
\end{multicols}
\end{algorithm}

\PCRQ\ (Algorithm~\ref{alg:pcrq}) builds upon \CRQ, 
so we focus on how to support persistence on top of \CRQ. 
As in \PIQ, our goal is to execute a single \pwbi\ instruction in each operation.
%
An enqueue operation, $\var{enq}$, either returns \CLOSED\  
(line~\ref{pcrq:eqclosed-1} or~\ref{pcrq:eqclosedret}),
or it returns \OK\ (line~\ref{pcrq:eqretOK}). As shown in Algorithm~\ref{alg:pcrq}, we 
managed to insert a single pair of \pwbi-\psynci\ instructions in each of these cases. 
Apparently, the new value written by $\var{enq}$ has to be written back to NVM before it 
completes, to ensure durable linearizability.
So, the pair of \pwbi-\psynci\ in line~\ref{pcrq:eqpwb} is necessary.
In order to respect the semantics of a tantrum queue, if $\var{enq}$ returns \CLOSED, all other enqueues 
on this instance of \CRQ\ that will be linearized after $\var{enq}$ should also return \CLOSED. Thus, 
\PCRQ\ has to ensure that the 
state of \Tail\ is \CLOSED\ after recovery. This means that we need to persist the \CLOSED\ value
in \Tail, as soon as it is written there.
This justifies the \pwbi-\psynci\ pair of line~\ref{pcrq:closed-pwb-2}.
However, a pair of \pwbi-\psynci\ is also needed in
line~\ref{pcrq:closed-pwb-1}, to avoid the following bad scenario:
Assume that a thread performs the \TAS\ of line~\ref{pcrq:eqts} and becomes slow before executing the 
\pwbi\ instruction of line~\ref{pcrq:closed-pwb-2}.
Afterwards, another thread reads in \Tail\ the value $1$ for the closed bit and returns \CLOSED\ in
line~\ref{pcrq:eqclosed-1} without persisting the closed bit.
Then, if a crash occurs, at recovery time, \Tail\ will not be closed anymore.
So, it may happen that a subsequent enqueue operation returns \OK,
violating the semantics of a tantrum queue. 

To reduce the persistence cost, we use a known technique~\cite{AB+22,FKK22}, which uses a flag 
($\var{closedFlag}$) 
to indicate whether executing the persistence instructions can be avoided.

\remove{
\begin{scenario}\label{scn:closed}
\here{
Consider a circular array $\Q$ of size $R=4$. Two enqueue operations $\var{enq}_0$, $\var{enq}_1$, 
and a dequeue
operation $\var{deq}$ start their execution concurrently. At first they execute their \FAI\ 
(lines~\ref{pcrq:eqfaa}, and~\ref{pcrq:dqfaa} respectively), which gives them the same index $0$
and increase the value of \Head\ to $1$, and \Tail\ to $2$. The enqueue operations now become slow
and $\var{deq}$ continues its execution. It finds $\Q[0]$ to be unoccupied hence it executes its
\CAS\ of line~\ref{pcrq:empty-1} successfully. It then goes to line~\ref{pcrq:fix} and reads \Tail\
(which has a value of $2$) and evaluates the condition of line~\ref{pcrq:empty} to \False. Hence,
$\var{deq}$ begins one more iteration of its outer while statement (line~\ref{pcrq:dqouterloop}) 
with a new index $1$ by executing its \FAI.

Now $\var{enq}_0$ begins its execution and finds the index of the item in $\Q[0]$ to be $4$ (because
of the change from $\var{deq}$), hence it would not try to insert its item there and begins one
more iteration (line~\ref{pcrq:eqloop}). Its \FAI\ will return $2$ and it again becomes slow.
Next $\var{deq}$ will execute again the aforementioned steps and again the condition of
line~\ref{pcrq:empty} is \False\ (because \Tail\ is $3$ and $h+1$ is $2$). Now $\var{enq}_1$ 
continues its execution and finds that a dequeue operation has already passed through $\Q[1]$ and
execute one more iteration (line~\ref{pcrq:eqloop}). This is actually a livelock. Eventually,
some enqueue operation evaluates \Call{Starving}{}() to \True, sets the closed bit to $1$ and then
becomes slow. Now the other enqueue operation reads the closed bit (line~\ref{pcrq:eqfaa}) finds it
$1$ and returns \CLOSED\ (line~\ref{pcrq:eqclosed-1}). Now a crash occurs.

Since there is an enqueue operation that returns \CLOSED\ we have to linearize it. By the semantics
of a tantrum queue after that, all enqueue operations should return \CLOSED\ too. But the recovered
queue is actually empty and the next enqueue operation that executes returns \OK, violating the
semantics of a tantrum queue.} \qed
\end{scenario}
}

Consider now a dequeue operation, $\var{deq}$. 
Assume first that $\var{deq}$ has executed the \CAS\ of line~\ref{pcrq:dqtran} (dequeue transition), 
successfully, and thus, at the absence of a crash, it will return in line~\ref{pcrq:dqret}. 
Our first effort was to follow the same approach as in \PIQ, and persist the triplet written by
$\var{deq}$ into $\Q$. 
Scenario~\ref{scn:uncertaintyProblem} explains why persisting just this triplet is not enough.
\begin{scenario} \label{scn:uncertaintyProblem}
Consider a circular array $\Q$ of size $R=5$ and let its state consist of the following
triplets: 
$\{(1, x_0, 0), (1, x_1, 1), (1, x_2, 2), (1, x_8, 8), (1, \bot, 4)\}$ (see Fig.~\ref{fig:crqunc}
for a graphical illustration). Assume that the values of these triplets are stored in NVM.
If a crash occurs, at recovery time, Algorithm~\ref{alg:pcrq} cannot distinguish between the following two cases: 
(a) the enqueue with index $8$ ($\var{enq}_8$) occurs after the enqueue with index $3$
($\var{enq}_3$) and its matching dequeue ($\var{deq}_3$) have
been completed.
(b) $\var{enq}_8$ stores its item in $\Q[3]$ before $\var{enq}_3$ stores
its own item there and no dequeue operations (at all) have been executed.
In both cases we linearize enqueues $\var{enq}_0$, $\var{enq}_1$, $\var{enq}_2$ and $\var{enq}_8$ 
(as they are persisted operations). 
However, in the first case, we have to also linearize $\var{enq}_3$ and dequeues $\var{deq}_0$, $\var{deq}_1$, $\var{deq}_2$, 
and $\var{deq}_3$, to ensure
the \FIFO\ property.
Thus, at recovery time \Head\ should have a value at least $4$.
Note that in the second case, only enqueue operations are executed, hence at recovery time \Head\ should be 
equal to $0$.
Since these two cases are indistinguishable to \PCRQ, \Recovery\ cannot choose a correct value for \Head.
\end{scenario}

From the above, we figured out that the value of \Head\ should be written back to NVM. 
However, we made the following observation: 
as long as the value of \Head\ is written back to NVM, writing the triplet of line~\ref{pcrq:dqtran}
back is unnecessary. 
If $\var{deq}$ reads $h$ in \Head, then
by writing \Head\ back, no dequeue after the crash
gets index $h$ or less, and $\var{deq}$ is considered to be persisted. 

Writing \Head\ back to NVM only in a successful dequeue is not enough.
To argue that the pair of 
\pwbi-\psynci\ of line~\ref{pcrq:dqpwbHead2} is needed, assume it is omitted from the code. 
Then, 
there is an execution which results in a state where before the crash the last linearized operation
is a dequeue which returns \EMPTY, whereas after the crash we can have a dequeue operation that returns
some item without any enqueue operation to be linearized between these two dequeues. 
This violates the \FIFO\ property.
Note that after any \pwbi-\psynci\ instruction pair, an operation will
return in the absence of a crash. This gives us the optimal one \pwbi-\psynci\ instruction pair for 
each operation.

\begin{figure}[htb]
\centering
\begin{subfigure}{0.33\linewidth}
\centering
\includegraphics{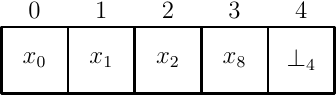}
\caption{State of $\Q$ for Scenario~\ref{scn:uncertaintyProblem}.} \label{fig:crqunc}
\end{subfigure}
\hfil
\begin{subfigure}{0.33\linewidth}
\centering
\includegraphics{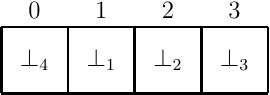}
\caption{State of $\Q$ for Scenario~\ref{scn:pcrq-deq-persists}.}
\label{fig:pcrq-deq-persists}
\end{subfigure}
\hfil\\
\vspace*{2em}
\begin{subfigure}{0.33\linewidth}
\centering
\includegraphics{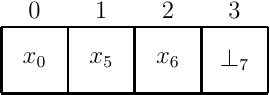}
\caption{State of $\Q$ for Scenario~\ref{scn:pcrq-rec-max}.} \label{fig:movehead}
\end{subfigure}
\caption{Circular array states. On top of the array are the array indexes. Each cell has a value;
the subscript corresponds to the index field value.} \label{fig:scenarios}
\end{figure}

\remove{
\begin{scenario} \label{scn:persistEmpty}


Consider a circular array of size $4$ (see Fig.~\ref{fig:emptyExample}) and four enqueue operations, 
$\var{enq}_0, \ldots, \var{enq}_3$. Assume that 
$\var{enq}_2$ and $\var{enq}_3$ finish their executions inserting 
successfully their items, $x_2$ and $x_3$, respectively, in $Q$, and return \OK. Then, 
three dequeue operations, $\var{deq}_0$ through $\var{deq}_2$, begin their execution. Operation 
$\var{deq}_2$ successfully removes and returns $x_2$. Since the value of \Head\ is already $3$ 
when $\var{deq}_2$ persists it, $\Head = 3$ is successfully written 
back to NVM by $\var{deq}_2$. Now four more dequeue operations, $\var{deq}_3$ through 
$\var{deq}_6$, begin their execution, and $\var{deq}_6$ performs an empty transition. Because at 
this point $\Head{} = 7$ and $\Tail{} = 4$, $\var{deq}_6$ returns \EMPTY. Notice that operations 
$\var{enq}_2$, $\var{enq}_3$, $\var{deq}_2$, and $\var{deq}_6$ have returned and thus must be 
assigned a linearization point.

Now there is a failure and the last persisted value of \Head\ before the crash was 
$3$.
The state of the queue at recovery time is $(1,0,\bot)$, $(1,1,\bot)$, $(1,10,\bot)$, $(1,3,x_3)$ 
and is shown in Fig.~\ref{fig:emptyExample}.
At recovery time we get $\Head=3$ and $\Tail=7$ (based on the execution of the recovery function 
in Algorithm~\ref{alg:pcrq}), which results in a state of the queue that is not empty. The 
problem is that we have assigned a linearization point to a dequeue operation that returned 
\EMPTY, but at recovery we have a non-empty queue.
\end{scenario}
}

We now describe the recovery function for \PCRQ. 
To find $\Tail$, we traverse the entire array (line~\ref{pcrq:rec-tail-loop}) and search for an 
index that is at least as large
as the maximum index recorded in each occupied cell of $\Q$ 
(lines~\ref{pcrq:rec-tail-occupiedchk}-\ref{pcrq:rec-tail-set2}). 
This is needed to ensure that all items that have been written back
before the crash will be dequeued. 
We have to also consider unoccupied elements of $\Q$, because they can result from 
pairs of enqueues-dequeues which are linearized\footnote{As we explain later, this may occur in two
cases. Either because the system, through a system cache invalidation, writes back the triplet, or because
an obsolete enqueue operation executes a \pwbi\ instruction writing it back.}.
Hence \Tail\ should get a bigger index.
To achieve this, \Tail\ gets the maximum index of unoccupied elements (minus $R$), which is greater 
than its current value (lines~\ref{pcrq:rec-tail-emptychk}-\ref{pcrq:rec-tail-set1}).

We next focus on \Head, 
which, at recovery time, gets a value at least as large as the value of \Head\ that is written back in NVM by the time of the crash. 
This is ensured by lines~\ref{pcrq:recinimax}-\ref{pcrq:rec-min-found}. 
To find the recovered value of \Head, we need to examine the indexes in the unoccupied cells
between current \Head\ and \Tail. We do so because \Recovery\ cannot distinguish if these elements 
are the result of a \CAS\ of line~\ref{pcrq:dqtran} or~\ref{pcrq:empty-1}.
In the former case, by
the way linearization points are assigned, there will be a linearized $\var{deq}$, that we have to
consider in order to ensure durable linearizability. Scenario~\ref{scn:pcrq-deq-persists} provides the details.

\begin{scenario} \label{scn:pcrq-deq-persists}

\remove{
Let $R=4$ be the size of the circular array $\Q$. Consider the following execution.
Initially there are three enqueue operations $\var{enq}_0(x_0)$, $\var{enq}_1(x_1)$, and 
$\var{enq}_2(x_2)$. Operation $\var{enq}_0(x_0)$ inserts its item and returns \OK, operation
$\var{enq}_1(x_1)$ executes its \FAI\ of line~\ref{pcrq:eqfaa} and becomes slow, and operation
$\var{enq}_2(x_2)$ executes its \CAS\ of line~\ref{pcrq:eqcas} successfully and becomes slow.
Now three dequeue operations become active $\var{deq}_0()$, $\var{deq}_1()$, and $\var{deq}_2()$.
Operations $\var{deq}_0()$ and $\var{deq}_1()$ execute only their \FAI\ on \Head\ and then become slow,
while $\var{deq}_2()$ executes its \CAS\ of line~\ref{pcrq:dqtran} successfully removing $x_2$ 
and then becomes slow. 
At this point $\Head = 3$ and $\Tail = 3$, but none of these two values have been written back to NVM.
Now operation $\var{enq}_2(x_2)$ starts its execution 
and returns \OK, and because of the pair of \pwbi-\psynci\ of line~\ref{pcrq:eqpwb},
$\crqVar\Q[2]$ (which contains the triplet $( s, 6, \bot )$) is written back to NVM
(this write-back could have also occurred by the system by evicting the appropriate cache line).
The state of $\Q$ at this point is illustrated in Fig.~\ref{fig:pcrq-deq-persists}.
Lastly, there are five more enqueue operations $\var{enq}_3(x_3)$ through $\var{enq}_7(x_7)$.
Operations $\var{enq}_3(x_3)$ through $\var{enq}_6(x_6)$ execute their \FAI\ on \Tail\ and then
become slow. Operation $\var{enq}_7(x_7)$ inserts successfully $x_7$ and returns \OK.
At this point $\Head = 3$ and $\Tail = 8$, but none of these two values is written back in NVM. 
Then, a system-wide failure occurs. To ensure durable linearizability, 
we have to assign linearization points (at least) to those operations that have successfully
completed by the crash, namely \y{to which exactly operations do you refer here?}. The question is what we should do with 
$\var{deq}_2()$, because \Head\ is not persisted but its removal of $x_2$ is persisted.
If we do not linearize it and, at recovery time, choose $\Head = 0$, after the crash there is no
way to dequeue $x_2$, which violates durable linearizability. Therefore we have to take into
consideration the triplets that are persisted, even if the dequeues that have written their values
have not written back the value of \Head\ yet.
}

Let $R=4$ be the size of the circular array $\Q$. Consider an enqueue, $\var{enq}_0(x_0)$, 
which executes its \CAS\ of line~\ref{pcrq:eqcas} successfully and then becomes slow.
Now, a dequeue, $\var{deq}_0()$, becomes active, executes its \CAS\ of 
line~\ref{pcrq:dqtran} successfully removing $x_0$, and then becomes slow.
Next, $\var{enq}_0(x_0)$ finishes its execution 
and returns \OK. Because of the pair of \pwbi-\psynci\ in line~\ref{pcrq:eqpwb},
$\Q[0]$ (which contains the triplet $(s, 4, \bot)$) is written back to NVM.
The state of $\Q$ at this point is illustrated in Fig.~\ref{fig:pcrq-deq-persists}.
Then, a system-wide failure occurs.
At crash time, the value of $\Head$ is equal to $1$, 
but it has not been written back. 

Since $\var{enq}_0(x_0)$ has been completed by the time of the crash, durable linearizability
requires that it is linearized. The question is whether we will linearize $\var{deq}_0()$. 
Since the triplet $(s, 4, \bot)$ has been written back, no dequeue can return $x_0$
after recovery. Thus, if we do not linearize $\var{deq}_0()$, durable linearizability will be violated.
If we linearize $\var{deq}_0()$, the value of \Head\ must be set to $1$ at recovery time. 
By considering only occupied cells of the array this will not be the case. 
Thus, \Recovery\ cannot ignore triplets containing the value $\bot$. 
\qed
\end{scenario}

In \PCRQ, \Recovery\ traverses all elements from $\Head$ to $\Tail$ 
(line~\ref{pcrq:recmaxfor}) 
to find the largest index (minus $R$), $\var{max}$, stored into an unoccupied cell,
that is larger than the current value of $\Head$. We check if the maximum of these values 
is larger than or equal to $\var{max}$, where $\var{max}$ is initialized to the written back value of
\Head\ (line~\ref{pcrq:recinimax}). If it is, we update $\var{max}$ to store the
maximum such value (line~\ref{pcrq:recsetmax}).

\Recovery\ also examines the occupied elements between \Head\ and \Tail. 
We follow a similar approach (as for $\var{max}$) for now finding the minimum index among these elements
(lines~\ref{pcrq:rec-min-ini}-\ref{pcrq:rec-min-found}). 
Scenario~\ref{scn:pcrq-rec-max} explains why doing this is necessary. 
%
%

\begin{scenario} \label{scn:pcrq-rec-max}
Let $R=4$ be the size of the circular array $\Q$. 
Assume that four enqueue operations, $\var{enq}_{0}, \ldots, \var{enq}_{3}$ are executed successfully returning \OK.
Next, $\var{deq}_0$ executes its \FAI\ (line~\ref{pcrq:dqfaa}) and then becomes slow. There are three
more dequeue operations $\var{deq}_{1}, \ldots, \var{deq}_{3}$ that execute successfully. These
deques return items $x_{1}, \ldots, x_{3}$, respectively, and write $\bot_{5}, \ldots, \bot_7$ 
to their respective cells, $\Q[1], \ldots, \Q[ 3]$.
Next, three more enqueue operations $\var{enq}_{4}, \ldots, \var{enq}_{6}$ begin their execution. 
Enqueue $\var{enq}_4$ executes its \FAI\ (line~\ref{pcrq:eqfaa}) and then becomes slow. 
The other two, $\var{enq}_{5}$ and $\var{enq}_{6}$
insert successfully their items, $x_5$ and $x_6$, and return \OK. Now a crash occurs.
Fig.~\ref{fig:movehead} illustrates the state of $\Q$ at crash time.

After the execution of lines~\ref{pcrq:rec-tail-read}-\ref{pcrq:rec-tail-set1} of \Recovery,
\Tail's recovered value is $7$. Moreover, because of $\var{deq}_3$, the value stored in NVM
for \Head\ is $4$ at crash time. After executing lines~\ref{pcrq:recinimax}-\ref{pcrq:recsetmax}, 
\Head's value does not change. Thus, \Head\ points to $\Q[0]$, which contains element $x_0$
whose index $0$ is less than $4$. Since $\var{enq}_{0}$ and $\var{deq}_{1}, \ldots, \var{deq}_{3}$ have terminated,
they have to be linearized. To ensure the FIFO property, $\var{deq}_{0}$ should be linearized as well. 
Thus, the value of \Head\ must be higher than $0$ at recovery time. 
To address this problem, \PCRQ\ sets \Head\ to 
the smallest index of an occupied cell between the current \Head\ and \Tail.
Hence, after lines~\ref{pcrq:recemptycheck}-\ref{pcrq:rec-min-found} 
are executed $\Head = 5$ and $\Tail = 7$. \qed

\remove{
Let the state of the circular array after executing lines~\ref{pcrq:recinimax}-\ref{pcrq:recsetmax} 
of the recovery function of Algorithm~\ref{alg:pcrq} be the one shown in Fig.~\ref{fig:movehead}, 
where $R = 5$. \y{Edw tha prepei na eksigisoume pws proekypse autos o pinakas
kai giati ta Head kai Tail exoun tis times pou exoun, dhladh pws tis phran.
Epishs giati auto to scenario den tha douleye me ton pinaka na periexei ta ekshs:
$x_0, x_6, x_7, \bot_3, \bot_9$?} Note that $\Head = 5$ after executing lines~\ref{pcrq:recinimax}-\ref{pcrq:recsetmax}. 
As explained later we 
will linearize all dequeue operations with indices up until $5$. However, \Head\ points to an element 
with index $0$ which is less than $5$. Since the dequeue with index $0$ will have already been 
linearized by the crash, \Head\ is updated to the smallest index of an occupied cell, \Head\
and \Tail. \qed
}

\remove{Note that $\Tail = 13$ because of a triplet $( s, 12, x_{12})$ at 
$\Q[2]$, and $\Head = 10$ because of a triplet $( s, 9, \bot)$ at $\Q[4]$. As discussed 
previously we have to take into consideration unoccupied nodes for both \Head\ and \Tail.
Let the minimum index of an occupied node, larger than \Head, be $11$. This is possible if at position 
$\Head \bmod R = 0$ there is a triplet $(s, 5, x_5)$, where $x_5$ can be any value (even
$\bot$).}
\end{scenario}

\Recovery\ also initiates all elements of $\Q$ outside the range $\Head \ldots \Tail$ 
with the appropriate value for serving subsequent enqueue operations 
(lines~\ref{pcrq:resetloop}-\ref{pcrq:resetsfloop1}).

\subsubsection{Local Persistence.}
Experiments show (Figure~\ref{fig:plcrq-comp} of Section~\ref{sec:evaluation}) that
having each successful dequeue executing a \pwbi\ instruction on \Head\
is quite costly. The reason for this is that \Head\ is a heavily contended variable. 

\PCRQ\ introduces the following technique to reduce the cost of persisting a variable $\var{var}$
that is highly contended. 
Every thread $p_i$ maintains a local copy,  $\var{var}_i$, of $\var{var}$. 
Thus, there exists an array of $n$ elements, one for each thread, 
where the thread stores its local copy of the variable. This array is maintained in NVM. 
Every time the thread performs an operation on $\var{var}$, 
it also updates its local copy with the value read. 
Then, $p_i$ persists $\var{var}_i$ rather than $\var{var}$. 
Since $\var{var}_i$ is a single-reader, single-writer variable,
persisting it is very cheap~\cite{AB+22}.

For instance,
in \PCRQ, every thread $p_i$ maintains a local copy, $\Head_i$, of \Head,
and persists $\Head_i$ rather than \Head. This happens every time
\Head\ is  to be persisted by \PCRQ
(lines~\ref{pcrq:dqpwbHead} and \ref{pcrq:dqpwbHead2}). 
The recovery function begins by setting the initial value of \Head\ to be the maximum
value found in every element of the array of local copies of the \Head\ values in NVM. 

We believe that this technique, which we call {\em local persistence}, is of general interest, 
and can be applied to reduce the persistence cost of many other algorithms. 

\subsubsection{Durable Linearizability.} We now discuss how to assign linearization points 
to ensure durable linearizability.
We say  that an enqueue (dequeue) operation {\em has index $i$} if it reads $i$ into \Tail\ (\Head)
at its last iteration of the while loop of line~\ref{pcrq:eqloop} (line~\ref{pcrq:dqouterloop}).
We use $\var{enq}_i$ to denote the enqueue with index $i$ that successfully inserts the triplet 
$( 1, i, x_i )$ in cell $\Q[i \bmod R]$.
A dequeue operation {\em matches} $\var{enq}_i$, 
if it reads $i$ in \Head\ (line~\ref{pcrq:dqfaa}); let $\var{deq}_i$ be this dequeue.

An enqueue operation is {\em persisted} in some epoch $E_k$, if it 
has executed successfully the \CAS\ of line~\ref{pcrq:eqcas} and the write-back of the new value has 
taken effect by $c_k$.
A persisted enqueue will terminate successfully if the 
system does not crash. 
Consider now a dequeue operation $\var{deq}$ that has read the value $i$ into $\Head$ the last 
time it read it.
We say that $\var{deq}$ is {\em persisted} in $E_k$, if either some 
value of $\Head \geq i$ has been written back by $c_k$ or a value $\var{idx} \geq i + R$ stored in 
some element of $\Q$ (by a \CAS\ of lines~\ref{pcrq:dqtran},~\ref{pcrq:empty-1}) has been written back by $c_k$.


We assign linearization points to enqueue and dequeue operations of $E_k$
according to the following rules: \\
{\bf Enqueue operations:} Consider an enqueue operation, $\var{enq}$, 
for which the \FAI\ of the last iteration of the while loop
(line~\ref{pcrq:eqloop}) before $c_k$ reads $i$ from \Tail. We linearize $\var{enq}$ in the following 
cases: \\
{\bf (1)} If $\var{enq}$ has been persisted by $c_k$. \\
{\bf (2)} If there exists a dequeue operation, $\var{deq}$, that 
reads $i$ into $\Head$ at its last iteration of the while loop, and $\var{deq}$ has been persisted by $c_k$. \\
{\bf (3)} If $\var{enq}$ has read $1$ in the closed bit of $\Tail$ and this value 
has been written back by $c_k$. \\
{\bf (4)} If $\var{enq}$ has executed successfully the \TAS\ of line~\ref{pcrq:eqts} and the new value 
of $\Tail.\var{cb}$ has been written back by $c_k$. \\
In the first three cases, $\var{enq}$ is linearized at the point that it executed its last \FAI\ 
operation.
In the fourth case, $\var{enq}$ is linearized at the time it executes the \TAS\ of 
line~\ref{pcrq:eqts}.

\noindent
{\bf Dequeue operations:} Consider a dequeue operation, $\var{deq}$, for which the \FAI\ of the last 
iteration of the while loop (line~\ref{pcrq:dqouterloop}) before $c_k$ reads $i$ in \Head. 
We linearize $\var{deq}$ if it has been persisted by $c_k$ and one of the 
following two conditions hold: \\
{\bf (1)} The matching enqueue operation to $\var{deq}$ has successfully executed the \CAS\ of line~\ref{pcrq:eqcas}. \\
{\bf (2)} $\var{deq}$ has read a value $t \leq i+1$ into $\Tail$ (line~\ref{pcrq:fix}). \\
In the first case, $\var{deq}$ is assigned a linearization point either
exactly after the linearization point of the enqueue operation $\var{enq}$ that read $i$ in \Tail,
if $\var{deq}$ is active at that point, or at the first point in which it is active and all dequeues that have smaller 
indices than $\var{deq}$ have been linearized. 
In the second case, $\var{deq}$ is assigned a linearization point at the read of \Tail\ in 
line~\ref{pcrq:fix}.

Algorithm~\ref{alg:PCRQlin} assigns linearization points to \PCRQ\ in a formal (algorithmic) way.
We continue to argue that \PCRQ\ is durably linearizable.  We start with the following lemma.

\begin{algorithm}[t]
\scriptsize
\caption{\PCRQ{} linearization procedure} \label{alg:PCRQlin}
\begin{algorithmic}[1]
\Statex Execution $\alpha$ and set of epochs $\mathcal{E}\coloneqq \{E_1, E_2, \ldots\}$ in $\alpha$
\State \LinQ{}: auxiliary infinite array
\State $\var{head}(\LinQ{})$:  head index of \LinQ{}
\State $\var{tail}(\LinQ{})$: tail index of \LinQ{}
\Statex

\For{each epoch $k=1,2,\ldots$}
    \For{each event $j=1,2,\ldots$ in epoch $E_k$} \label{PCRQlin:eventloop}
        \If{$e_j$ is a \TAS\ of line~\ref{pcrq:eqts} by $\var{enq}(x)$ and its change has been 
        persisted by $c_k$}
            \State Linearize $\var{enq}(x)$ at $e_j$
        \ElsIf{$e_j$ is a \FAI{} of line~\ref{pcrq:eqfaa} by $\var{enq}(x)$ that reads $1$ in the 
        closed bit of $\Tail$ \textbf{and} the closed bit is persisted by $c_k$}
            \State Linearize $\var{enq}(x)$ at $e_j$
        \ElsIf{$e_j$ is a \FAI\ of $\var{enq}(x)$ in $E_k$} \label{PCRQlin:eqfai}
            \If{$e_j$ is the last \FAI\ of $\var{enq}(x)$ and $\var{enq}(x)$ is persisted\\
                \textbf{or} there exists a $\var{deq}$ such that: \\
                a) $\var{deq}$ has executed successfully the \CAS\ of line~\ref{pcrq:dqtran}, \\
                b) $\var{deq}$ reads $i$ into $\Head$ at its last iteration of the while loop \\
                (line~\ref{pcrq:dqouterloop}), and $\var{deq}$ has persisted in $E_k$} 
                \State $\LinQ[\localTail]\gets \itemX$
                \State $tail(\LinQ)\gets \localTail + 1$
                \State Linearize $\var{enq}(x)$ at $e_j$ \label{PCRQlin:eqlin}
            \EndIf
        \ElsIf{$e_j$ is a \Tail{} read of line \ref{pcrq:fix} by $\var{deq}$ that gives an empty queue
            \textbf{and} $\var{deq}$ has been persisted in $E_k$}
            \State Linearize that dequeue operation at $e_j$
        \EndIf

        \While{$head(\LinQ) < tail(\LinQ)$}
            \State $\localHead\gets \min \{i\colon \LinQ[i]\neq \bot\}$
            \State Let $\var{deq}$ be the dequeue whose \FAI\ at its last iteration returns $\localHead$
            \If{$\var{deq}$ not active in $e_1, \ldots, e_j$ of $E_k$}
                \State \textbf{break}
            \EndIf
            \If{$\var{deq}$ has executed successfully the \CAS\ of line~\ref{pcrq:dqtran}, \\
                $\var{deq}$ reads $\localHead$ into $\Head$ at its last iteration of the while loop \\
                (line~\ref{pcrq:dqouterloop}), and $\var{deq}$ has persisted in $E_k$}
                \State $\itemX{}\gets \LinQ{}[\localHead{}]$
                \State $\LinQ{}[\localHead{}]\gets \bot$
                \State $head(\LinQ{})\gets \localHead{} + 1$
                \State Linearize that dequeue operation at $e_j$ \label{PCRQlin:dqlin}
            \EndIf
        \EndWhile
    \EndFor
\EndFor
\end{algorithmic}
\end{algorithm}

\begin{lemma}
\label{lin_deq}
Assume that a dequeue $\var{deq}$ reads a value $i$ into $\Head$ and assume that $\var{deq}$ writes back \Head\ to NVM by a failure $c_k$. 
Then, at recovery time, the following hold: 
(a) \Head\ will have a value greater than $i$, and (b) \Tail\ will have a value greater than $i$.
\end{lemma}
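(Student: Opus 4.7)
My plan is to trace the body of \Recovery\ step by step, tracking how the recovered values of \Head\ and \Tail\ evolve under the assumptions of the lemma. The argument splits naturally into parts (a) and (b), with (a) setting up (b).

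For part (a), the key observation is that $\var{deq}$ completes a \pwbi\ on its local copy $\Head_{\var{deq}}$ (line~\ref{pcrq:dqpwbHead} or~\ref{pcrq:dqpwbHead2}) followed by a \psynci\ before $c_k$, so whatever value $\var{deq}$ stored into $\Head_{\var{deq}}$ at line~\ref{pcrq:dqfaa} is in NVM at crash time. The first instruction of \Recovery\ sets $\Head \gets \max_{j=1,\ldots,n} \Head_j$, so this stored value contributes to the maximum and yields $\Head > i$ after that assignment. I would then verify by inspection of the rest of \Recovery\ that no subsequent update to \Head\ can lower it: the $\var{max}$-update on lines~\ref{pcrq:recinimax}--\ref{pcrq:rec-max-found} initialises $\var{max} \gets \Head$ and only ever replaces it by a strictly larger candidate; the $\var{min}$-update on lines~\ref{pcrq:rec-min-ini}--\ref{pcrq:rec-min-found} confines $\var{min}$ to indices that the guard on line~\ref{pcrq:recgteHead} forces to be at least \Head. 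Hence at the end of \Recovery\ we still have $\Head > i$.

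For part (b), I would combine part (a) with the branch at line~\ref{pcrq:recemptycheck}, which sets $\Tail \gets \Head$ whenever $\Head > \Tail$. That branch alone already yields $\Tail \geq \Head > i$. The for loop on lines~\ref{pcrq:rec-tail-loop}--\ref{pcrq:rec-tail-set1} is monotone non-decreasing in \Tail, and by inspection no later instruction of \Recovery\ writes to \Tail, so the bound survives to the end.

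The main obstacle is pinning down precisely what value is persisted into $\Head_{\var{deq}}$ and arguing that it is strictly greater than $i$. Read literally, line~\ref{pcrq:dqfaa} stores into $\Head_{\var{deq}}$ the pre-increment value returned by \FAI, which is exactly $i$, so from this step alone \Recovery\ only guarantees $\Head \geq i$. The strict bound requires squeezing in one more ingredient. In the success path, $\var{deq}$'s own \CAS\ at line~\ref{pcrq:dqtran} writes into $\Q[i \bmod R]$ a triplet carrying index $i + R$, and under the TSO ordering around the subsequent \pwbi-\psynci\ on $\Head_{\var{deq}}$ this cell's index can be argued to be in NVM by $c_k$; the for loop on lines~\ref{pcrq:rec-tail-loop}--\ref{pcrq:rec-tail-set1} then forces $\Tail \geq i+1$, and the $\var{min}$-update of lines~\ref{pcrq:rec-min-ini}--\ref{pcrq:rec-min-found} carries this through to \Head. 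The empty-queue path reached via line~\ref{pcrq:dqpwbHead2} requires a symmetric argument through the empty-transition \CAS\ of line~\ref{pcrq:empty-1}. I expect this ordering-and-cache-line reasoning to be the delicate step and the one where the assumption that the \pwbi\ has actually taken effect is exploited most heavily.
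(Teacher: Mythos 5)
Your skeleton for the ``\Head\ written back'' case --- take the maximum over the persisted local copies $\Head_j$, check that every later assignment to \Head\ in \Recovery\ is non-decreasing, and get \Tail\ essentially for free from the branch at line~\ref{pcrq:recemptycheck} --- is exactly the paper's argument for that case, and the off-by-one you flag is real: the paper's proof simply asserts that ``a value of \Head\ greater than $i$ has been written back,'' silently treating the persisted value as $i+1$ rather than the \FAI\ return value $i$. The problem is the patch you propose for the strict inequality. A \pwbi\ on $\Head_{\var{deq}}$ followed by \psynci\ flushes only the cache line containing $\Head_{\var{deq}}$; under the paper's explicit epoch persistency model, neither TSO nor the \psynci\ gives any guarantee that the earlier \CAS\ of line~\ref{pcrq:dqtran} on $\Q[i\bmod R]$ --- a different location for which $\var{deq}$ never issues a \pwbi\ --- has reached NVM by $c_k$. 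TSO orders the visibility of stores in the volatile memory system, not their persistence, so ``this cell's index can be argued to be in NVM by $c_k$'' does not hold. In the empty-queue path the situation is worse: $\var{deq}$ may reach line~\ref{pcrq:dqpwbHead2} via the test of line~\ref{pcrq:dqichk} without performing any \CAS\ at all, so there is no triplet of $\var{deq}$'s to persist. (Also, the loop that exploits an unoccupied cell carrying index $i+R$ is the $\var{max}$ loop of lines~\ref{pcrq:recmaxfor}--\ref{pcrq:recsetmax}, not the $\var{min}$ loop you cite: the $\var{min}$ loop inspects only occupied cells.)

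The second gap is one of coverage. Although the lemma is phrased as ``$\var{deq}$ writes back \Head,'' the paper proves it --- and later invokes it --- under the hypothesis that $\var{deq}$ is \emph{persisted}, which is a disjunction: either some value of $\Head\geq i$ has been written back, \emph{or} an index $\var{idx}\geq i+R$ written into $\Q$ by a \CAS\ of line~\ref{pcrq:dqtran} or~\ref{pcrq:empty-1} has been written back (e.g., by a later enqueuer's \pwbi\ on that cell, or by a system eviction). In the second disjunct no local copy of \Head\ need be in NVM at all, so the $\max_j\Head_j$ step gives nothing; the paper instead routes the bound through \Tail: lines~\ref{pcrq:rec-tail-emptychk}--\ref{pcrq:rec-tail-set1} force $\Tail\geq\var{idx}-R+1>i$, and then lines~\ref{pcrq:recinimax}--\ref{pcrq:rec-max-found} lift \Head\ above $i$ from the same unoccupied cell. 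That is precisely the mechanism you gesture at as a patch, but it must carry an entire case of the hypothesis on its own, with the persistence of the cell supplied by the hypothesis itself rather than by an ordering argument.
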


\begin{proof}
Since $\var{deq}$ is persisted, 
either a value of \Head\ greater than $i$ has been written back
by $c_k$, or an index with value at least $i+R$ has been written into $\Q$ 
and into NVM by $c_k$. 

In case (a), 
\Head\ has a value greater than $i$ at the beginning of the execution of the recovery code. 
By inspection of the code, \Head\ is not assigned a value smaller than this initial value. 
In the second case, let $m$ be the index, which is greater than or equal to $i+R$. 
That index has been written by a \CAS\ of line~\ref{pcrq:dqtran} or~\ref{pcrq:empty-1} 
into $\Q$, and it has been written back by $c_k$. We argue that a greater value than $m-R$ will be 
written into \Head\ during the execution of lines~\ref{pcrq:recinimax}-\ref{pcrq:rec-max-found}.
By inspection of the recovery code, $\Tail$ gets a value greater than the maximum index in elements of $Q$
at recovery time. Thus, $\Tail$ has a value greater than $m-R$, because of 
lines~\ref{pcrq:rec-tail-emptychk}-\ref{pcrq:rec-tail-set1}. Due to 
lines~\ref{pcrq:recinimax}-\ref{pcrq:rec-max-found},
$\Head$ will be initialized to a value greater than $m-R$. After recovery, $\Head$ has a value 
greater than $i$.
This proves case (a).

In case (b),
because $\var{deq}$ is persisted, either some value of $\Head > i$ has been written back 
by $c_k$ or a value $\var{idx}\geq i$ has been written in some element of $\Q$ and in NVM by $c_k$.
Assume first that a value of $\Head > i$ has been written back by $c_k$.
It is apparent from lines~\ref{pcrq:recemptycheck}-\ref{pcrq:rec-min-found}, that 
$\Tail \geq \Head$, at recovery time. Thus,
\Tail\ has a value at least $i$ at recovery time.
The other case is also simple. By inspection of Algorithm~\ref{alg:pcrq}, if at some point in time,
an element of $\Q$ stores an index $\var{idx}$, then it will store an index at least as large as 
$\var{idx}$ at all later times. 
At recovery time, \Tail\ gets the maximum among all indices
written in $Q$, and by assumption, a value greater than or equal to $i$
has been written in some element of $Q$ and persisted by $c_k$. \qed
\end{proof}

\noindent
We consider different cases regarding the execution of the algorithm at crash time.

\noindent
{\bf A. Enqueue Operations}

\noindent
We first study enqueues that return closed and then successful enqueues.

\noindent
\textbf{I. Closed enqueue:} Consider an enqueue $\var{enq}(x)$ that executes the 
\TAS\ of line~\ref{pcrq:eqts} setting the value of $\var{Tail.cb}$ to $1$.
If the closed bit is written back by $c_k$, then after the crash any new enqueue operation 
on this instance of $\crqV$ will return \CLOSED, as needed by the semantics of a tantrum queue. 
Suppose the closed bit is not written back by $c_k$. By inspection of the code 
(lines~\ref{pcrq:closed-pwb-1} and~\ref{pcrq:closed-pwb-2}),
this implies that neither $\var{enq}(x)$, nor any other enqueue that has read the closed bit to be $1$
(at its \FAI\ in \Tail\ of line~\ref{pcrq:eqfaa}) have returned. 
Specifically, if one such enqueue operation had returned, then because of the \pwbi-\psynci\ pairs of 
lines~\ref{pcrq:closed-pwb-1} and~\ref{pcrq:closed-pwb-2}, the closed bit would 
have been written back.
According to Algorithm~\ref{alg:PCRQlin} and the linearization rules described above, 
these enqueue operations are not linearized. 
Thus, at recovery time, the closed bit equals $0$ and no enqueue operation that returns \CLOSED\ 
has been linearized in this epoch. 

\noindent
\textbf{II. Successful enqueue:} We now focus on a successful enqueue $\var{enq}_i(x_i)$ with index $i$
and item $x_i$. We proceed by case analysis. In each case, we argue about the following, at recovery time:\\
$\bullet$ If $\var{enq}_i$ and its matching $\var{deq}$ are linearized, then 
          \Head\ and \Tail\ get a value greater than $i$.\\
$\bullet$ If only $\var{enq}_i$ is linearized, then $\Head \leq i < \Tail$, and $x_i$ exists in $Q$, at recovery time. \\
$\bullet$ If $\var{deq}$ that read $i$ in \Head\ is linearized, $\var{enq}_i$ is also linearized.\\
$\bullet$ If neither $\var{enq}_i$, nor its matching $\var{deq}$ are linearized, 
then $x_i$ does not exist in $\Q$ after $c_k$.

\noindent
{\bf (i) Assume first that the triplet $(s, x_i, i)$ stored into $\Q[i \bmod R]$ by $\var{enq}_i(x_i)$ is written back by $c_k$.}
Then, according to Algorithm~\ref{alg:PCRQlin}, $\var{enq}_i(x_i)$ is linearized at its last \FAI.  

We argue that $\Tail$ will have a value greater than $i$ at recovery time. 
By inspection of the recovery code (lines~\ref{pcrq:rec-tail-read}-\ref{pcrq:recemptycheck}), 
\Tail\ gets a value greater than 
the maximum index seen in every occupied element of $\Q$. 
By inspection of Algorithm~\ref{alg:pcrq}, if 
an element of $\Q$ has an index $\var{idx}$, then at all later times it will get an index at 
least that large. 
Since $i$ has been written back by $c_k$, $\Tail$ will get a value
greater than $i$. 
To argue about \Head, we consider the following cases. 

{\bf (a)} If no matching $\var{deq}$ for $\var{enq}_i$ is linearized in $E_k$, then
we argue that $x_i$ resides between $\Head$ and $\Tail$ at recovery time. 
Thus, $x_i$ can be dequeued on a subsequent epoch. 
This is true if $\Head \leq i < \Tail$.
If there is no matching $\var{deq}$, then no dequeue operation has read an index
greater than or equal to $i$, so no such dequeue has been linearized.
Assume now that there exists a matching $\var{deq}$ to $\var{enq}_i$.  
Since $\var{enq}_i$ has been persisted by $c_k$
and $\var{deq}$ is not linearized, by the way we assign linearization points, 
$\var{deq}$ is not persisted. Thus, no value of $\Head$ greater than or equal to $i$, nor an 
$\var{idx}$ (in some element of $\Q$) greater than or equal to $i$, have been written back by $c_k$. 
Thus, $\var{max}$ is initialized to a value less than $i$ (line~\ref{pcrq:recinimax}). 
After the execution of lines~\ref{pcrq:recmaxfor}-\ref{pcrq:rec-max-found}, 
\Head\ continues to have a value smaller than $i$. 
Then, lines~\ref{pcrq:rec-min-ini}-\ref{pcrq:rec-min-found} of recovery are executed. 
Since $\Head < i < \Tail$, it follows that the minimum value will be $i$ 
or smaller. Thus, the value of \Head\ after recovery is less than or equal to $i$.

{\bf (b)} If $\var{enq}_i$'s matching $\var{deq}$ is linearized in $E_k$,
we argue that at recovery, $\Head$ will get a value greater than $i$.
By the way we assign linearization points, $\var{deq}$ is persisted, and 
by Lemma~\ref{lin_deq} it follows that at recovery time, $\Head$ gets a value greater than $i$.

\noindent
{\bf (ii) Assume next that the triplet written in $\Q[i \bmod R]$ is not written back by $c_k$.}
We consider the following cases.

{\bf (a)} If $\var{enq}_i$ is linearized in $E_k$, then 
there exists a $\var{deq}$ which reads $i$ in $\Head$ and has been 
persisted by $c_k$.
By the way linearization points are assigned, $\var{deq}$ is linearized.
By Lemma~\ref{lin_deq}, $\Head$ and \Tail\ are assigned, at recovery time, a value greater than $i$.

{\bf (b)} If $\var{enq}_i$ is not linearized in $E_k$, then
by the way linearization points are assigned, $\var{enq}_i$ cannot be persisted. 
Moreover, no $\var{deq}$ that reads $i$ in $\Head$ by $c_k$ is persisted.
Since $\var{enq}_i$ is not persisted, $x_i$ does not exists in $\Q$ after $c_k$. 
Since no $\var{deq}$ reads $i$ into $\Head$, there is no matching $\var{deq}$ to $\var{enq}_i$
and thus, no such $\var{deq}$ is linearized. Otherwise, $\var{deq}$ exists but has not been persisted. 
By the way we assign linearization points, $\var{deq}$ is not linearized. 

\vspace*{.1cm}
\noindent
{\bf B. Dequeue operations}

We now focus on a dequeue operation, $\var{deq}_i$ with index $i$. Recall that $deq_i$ will succeed to 
dequeue an item from $Q$ only if the index associated with this item is equal to $i$.
We proceed by case analysis. In each case, we argue about the following, in whatever regards the recovery time:\\
$\bullet$ If $\var{deq}_i$ is linearized and would return $x_i \neq \EMPTY$ (if no crash occurs), 
          then its matching enqueue operation, $enq_i(x_i)$, is also linearized. Moreover, 
          the recovered values of \Head\ and \Tail\ are greater than $i$.\\
$\bullet$ If $\var{deq}_i$ is linearized and would return \EMPTY\ (if no crash occurs), 
          then the recovered value of $\Head$ is greater than $i$. 
          Moreover, if an enqueue reads $i$ in $\Tail$, it does not store its item in $Q[i]$.
          It repeats the while loop of line~\ref{pcrq:eqloop} and stores its item in $Q[i']$, where $i' > i$. \\
$\bullet$ If $\var{deq}_i$ is not linearized, whereas $\var{enq}_i$ is linearized, then no dequeue operation with index $i' > i$ is linearized,
	  and the recovered value of $\Head$ is smaller than $i$ whereas the recovered value of \Tail\ is greater than $i$. \\
$\bullet$ If neither $\var{deq}_i$ nor $\var{enq}_i$ are linearized, 
          then $x_i$ does not exist in $\Q$ after $c_k$.
          
\noindent
{\bf A. Dequeue that (would) return \EMPTY:} 
Assume that $\var{deq}_i$ evaluates the condition of the {\tt if} statement of line~\ref{pcrq:empty} to \True.
We consider two cases. 

(i) $\var{deq}_i$ is persisted. Then, by Lemma~\ref{lin_deq}, 
we get that \Head\ 
has a value greater than $i$ at recovery time. 
Since $deq_i$ is persisted, all dequeues whose index is smaller 
than $i$  are also persisted (by definition). 
By the way linearization points are assigned, no operation that 
has index smaller than $i$ will be linearized after the linearization point of $\var{deq}_i$;
moreover, all dequeues with indexes smaller than $i$ have been linearized before $\var{deq}_i$
(as was the case in \crqV). 
Since $deq_i$ reads a value less than $i$ in \Tail\ and it is linearized at the point
that it performed this read, it follows that all enqueue operations with indexes greater 
than $i$ cannot have been linearized before $\var{deq}_i$ (since they perform their 
\FAI\ on \Tail\ at some later point than $\var{deq}_i$'s read of \Tail).
Since \crqV\ is a correct algorithm, it follows that the queue is indeed \EMPTY\
at the linearization point of $\var{deq}_i$. 

Moreover, we argue that if the system crashes before any other enqueue operation is linearized
between the time that $deq_i$ reads \Tail\ and the time that the crash occurs, 
then the value of $\Tail \leq \Head$ at recovery time.
By Lemma~\ref{lin_deq}, \Tail\ is at least $i$ at recovery time.
Assume, by the way of contradiction, that the recovered value of \Tail\ 
is greater than $i$. Since there are no enqueue operations that are linearized between
the time that $\var{deq}_i$ reads \Tail\ and the time that the crash occurs, by the way
linearization points are assigned, it follows that
there does not exist any persisted enqueue with index greater than $i$.
since $\var{deq}_i$ returns \EMPTY, it has not executed a successful dequeue transition.
These imply that there does not exist any occupied cell in $Q$ with index greater than or equal to $i$. 
By inspection of the code (lines~\ref{pcrq:rec-tail-emptychk}-\ref{pcrq:rec-tail-set1}), it follows that 
if the recovered value of \Tail\ is $i' > i$, it is because
there exists an index equal to $i' + R$ in an unoccupied cell.
By inspection of the code (lines~\ref{pcrq:recmaxfor}-\ref{pcrq:recsetmax}), 
\PCRQ\ will examine this cell in order to find the recovered value of \Head.
It follows that $\Head$ will get a value equal to $\Tail$ in this case. 

(ii) $\var{deq}_i$ is not persisted. 
Then, neither a value of $\Head \geq i$ has been written back by $c_k$, 
nor a value $\var{idx} \geq i + R$ written in 
some element of $\Q$ has been written back by $c_k$.
Since $deq_i$ is not persisted, by inspection of the code, 
it follows that $deq_i$ does not cause any change to the state of $Q$. 
By the way linearization points are assigned,
$deq_i$ is not linearized. 
Note also that no enqueue  will be persisted with index $i$. 
(Thus, none of the required claims is violated.)

\noindent
{\bf B. Dequeues that (would) return a value other than \EMPTY.}
%
Assume that $\var{deq_i}$ performs a dequeue transition (by successfully execuiting the \CAS\ of line~\ref{pcrq:dqtran})
replacing the triplet $(s, i, x_i)$  with the triplet $(s,i+R, \bot)$ in  $\Q[i \bmod R]$. 
We proceed by case analysis.

\noindent
{\bf (i) Assume first that $\var{deq_i}$ is persisted by $c_k$.} 
By the way we assign linearization points, both operations, $\var{deq}_i$ and its matching enqueue $\var{enq}_i$ are linearized (in $E_k$).
By Lemma~\ref{lin_deq}, \Head\ and \Tail\ have a value greater than $i$ at recovery time. 

\noindent
{\bf (ii) Assume next that $\var{deq}_i$ is not persisted by $c_k$.}
By definition, neither a value of $\Head \geq i$ has been written back by $c_k$, 
nor a value $\var{idx} \geq i + R$, stored in $\Q$, has been written back by $c_k$.
By the way linearization points are assigned, it follows that neither $\var{deq}$, nor any dequeue operation with 
index $i' > i$, is linearized. 
We consider the following cases.

{\bf (a)} $\var{enq}_i$ is linearized. Since $\var{deq_i}$ is not persisted by $c_k$, by the way linearization points are assigned it follows that 
$\var{enq}_i$ must be persisted by $c_k$. By inspection of the pseudocode, it follows that $x_i$ can only be removed from $Q$ 
by a dequeue transition. Moreover, such a transition can only be executed by $deq_i$. Since $deq_i$ is not persisted,
$x_i$ appears in $\Q$ at recovery time.

We argue that the value of \Tail\ after recovery is greater than $i$. 
By inspection of \Recovery\ (lines~\ref{pcrq:rec-tail-loop}-\ref{pcrq:recemptycheck}), we have that the value of \Tail\ is (at least) the
maximum index (plus one) of the occupied cells of $\Q$. Since $x_i$ appears in $Q$ at recovery time, 
\Tail\ will be assigned a value that is greater than $i$ (lines~\ref{pcrq:rec-tail-read}-\ref{pcrq:rec-tail-set2}). 

We now argue that \Head\ will have a value less than or equal to $i$ at recovery time.
Recall that there is no persisted value of \Head\ greater than $i$, hence $\var{max}$ is initialized to a value less than or equal to $i$ 
(line~\ref{pcrq:recinimax}). Moreover, there is no index value $\var{idx}$ greater than
or equal to $i+R$ written in some (unoccupied) element of $\Q$ (by a 
\CAS\ of line~\ref{pcrq:dqtran} or \ref{pcrq:empty-1}) that has been written back by $c_k$. 
Thus, after the execution of lines~\ref{pcrq:recinimax}-\ref{pcrq:rec-max-found}, 
the value of \Head\ still has a value less than or equal to $i$. 
We now examine lines~\ref{pcrq:rec-min-ini}-\ref{pcrq:rec-min-found}, 
where \Head\ gets the minimum index of the occupied array elements (in range). 
Since $\var{enq}_i$ is persisted, \Head\ is less than or equal to $i$ and \Tail\
is greater than $i$, it follows that the recovered value of \Head\ is less than or equal to $i$. 

{\bf (b)} Assume that $\var{enq}_i$ is not linearized (so, neither $\var{deq_i}$ nor $\var{enq}_i$ are linearized).
By the way linearization points are assigned, $\var{enq}_i$ is not persisted. It follows that $x_i$ does not appear in $\Q$ at recovery time.

{\bf C. Empty and Unsafe transitions.}

\noindent
We now focus on the cases where a dequeue $\var{deq}$ performs an empty or an unsafe transition
(after reading index $h$ in \Head), and does not evaluate the condition 
of the {\tt if} statement of line~\ref{pcrq:empty} to \True. 
Since $\var{deq}$ reads $\localTail$ in \Tail\ (line~\ref{pcrq:fix}) such that $\localTail > h + 1$, 
by inspection of the code, it follows that $deq$
neither changes the value of \Head, nor performs any persistence instruction. 
If no crash occurs, $deq$ will execute another iteration of the while loop of line~\ref{pcrq:dqouterloop},
getting the next available index. Thus, no dequeue with index $h$ will be included in the linearization.
By the way linearization points are assigned, this is so even if a crash occurs by the time that 
$deq$ reads \Head\ again. 
Note that no persisted enqueue will have index $h$.

\textbf{Empty transition:} 
By inspection of the code, 
the enqueue $\var{enq}$ (if it exists) that has read $h$ into \Tail\ 
has not yet executed successfully its \CAS\ of line~\ref{pcrq:eqcas}, 
and $\var{deq}_i$ has executed successfully its \CAS\ of 
line~\ref{pcrq:empty-1}. 
If the triplet $(s, h+R,\bot)$ stored into $\Q[h \bmod R]$, 
when $deq$ executes the \CAS\ of line~\ref{pcrq:empty-1}, is not written back in NVM by $c_k$,
no change on the state of the queue has occured, thus none of the dequeue claims are violated.)

Assume that $(s, h+R,\bot)$ is persisted by $c_k$.
This may cause the linearization of some dequeue operations 
with indexes less than $h$. By the way linearization points are assigned,
there should be matching enqueues of all these dequeues
which will be also linearized. 
By Lemma~\ref{lin_deq}, both $\Head$ and \Tail\ have values greater than the maximum index
of any of these pairs at recovery time. 
It follows that none of the claims mentioned above 
(bullets for dequeues) is violated. 

\textbf{Unsafe transition:}
By inspection of the code, 
the node $\Q[h \bmod R]$ is occupied with an index value $j < h$, and $\var{deq}$
reads $h$ into \Head\ and successfully executes the \CAS\ of line~\ref{pcrq:unsafe}.
Then, according to Algorithm~\ref{alg:pcrq}, we have an
unsafe transition which sets the safe bit $s$ to zero. 
Note that the change of the safe bit is to ensure that the enqueue operation that would read $h$ in
\Tail\ would not enqueue its item in $Q[h \bmod R]$. The unsafe transition does not 
cause any change to the indexes of cells, and it does not cause the linearization of any operaiton. 
Moreover, at recovery time, all safe bits are set to $1$ (line~\ref{pcrq:resetsfloop1}). 
It follows that none of the claims mentioned above 
(bullets for dequeues) is violated.

\subsection{Persistent LCRQ} \label{sec:plcrq}

In this section, we describe the persistent version of \LCRQ{}, called \PLCRQ.
\PLCRQ\ implements a lock-free linked list of nodes, each of which represents 
an instance of \PCRQ. Each \PCRQ\ instance has its own \Head\ and \Tail\ shared variables that encapsulate the
head and the tail of a durably linearizable tantrum queue of finite size, as discussed in Section~\ref{sec:pcrq}. 
\PLCRQ\ eliminates the disadvantage that the \PCRQ\ queue is of finite size 
and transforms it from a tantrum to a FIFO queue.
Specifically, when the currently active instance of \PCRQ\ becomes \CLOSED, a new instance
is appended in the linked list. Moreover, when the queue of a \PCRQ\ instance
becomes \EMPTY, a dequeuer dequeues this instance from the linked list 
and searches for the item to dequeue in the next element of the linked list. 
Two pointers, \First\ and \Last\ point to the first and the last element
of the linked list, which operates as a queue. These pointers are stored in the NVM.

The pseudocode of \PLCRQ\ (Algorithm~\ref{alg:plcrq} in Appendix) closely follows
that of the lock-free queue implementation by Michael and Scott~\cite{MSQUEUE96}. 
An \Call{Enqueue}{$\itemX{}$} operation starts by reading \Last\ (line~\ref{plcrq:eqread1}),
to get a reference to an instance, \crqV, of \PCRQ{}  (of Algorithm~\ref{alg:pcrq}).
If it manages to read the last element of the list (i.e., the next pointer of the node $\crqV$ 
pointed by \Last\ is equal to \NULL), 
the enqueue operation calls 
\Call{Enqueue}{\crqV{}, $\itemX{}$} method (line~\ref{plcrq:eqenq}) to insert $x$ into the queue of \crqV. 
If this method returns \OK{} then
\Call{Enqueue}{$\itemX{}$} is complete and returns also \OK{}.
If \Call{Enqueue}{\crqV{}, $\itemX{}$} returns \CLOSED{} (line~\ref{plcrq:eqenq}), then
\Call{Enqueue}{$\itemX{}$} creates a new node (i.e., a new instance of \PCRQ), 
containing $\itemX{}$ (line~\ref{plcrq:eqnewcrq}),
and tries to append that to the list by executing the \CAS\ of line~\ref{plcrq:eqlastchk}.  
If the \CAS\ succeeds, it tries to also change \Last{} to point to the new \PCRQ{} node
(line~\ref{plcrq:eqmovetail}).
If the \CAS\ fails, another \PCRQ{} node was inserted. Then, \Call{Enqueue}{$\itemX{}$} 
retries by executing one more loop of its while statement (line~\ref{plcrq:eqloop}).

\Call{Enqueue}{$\itemX{}$} does not attempt to append its node if \Last\ does not point
to the last element of the list (line~\ref{plcrq:eqtailchk}). 
It rather tries (line~\ref{plcrq:eqtailupdate}) to update \Last{} to point to the last element,
and starts one more loop (line~\ref{plcrq:eqloop}). 
This scenario might happen if after appending a new node and before updating \Last, a thread becomes slow (or fails).   
Then, \Last\ is {\em falling behind} (i.e., it does not point to the last element of the list), 
and other threads will have to help by setting \Last\ to point
to the last element  of the list, before they try to append their nodes.

A \Call{Dequeue}{}() reads \First{} to get a pointer to the instance, \crqV, of \PCRQ\ 
that is first in the linked list, and calls \Call{Dequeue}{\crqV{}} (lines~\ref{plcrq:dqread}-\ref{plcrq:dqdeq})
in an effort to dequeue an element from that instance. 
If \Call{Dequeue}{\crqV{}} returns some value, then \Call{Dequeue}{} is complete and returns also that value
(lines \ref{plcrq:dqemptychk} and \ref{plcrq:returnv}).
If \Call{Dequeue}{\crqV{}} returns \EMPTY{}, then \Call{Dequeue}{} checks if there is no next \PCRQ{}
node in the list (line~\ref{plcrq:dqnextchk}), and if that is true, then it returns also \EMPTY{}.
If there is another \PCRQ{} node it tries to move the \First{} pointer (line~\ref{plcrq:dqheadmove})
to that node and starts a new loop (line~\ref{plcrq:dqloop}).

\begin{algorithm}[t]
\scriptsize
\caption{Persistent \LCRQ\ (\PLCRQ)} \label{alg:plcrq}
\begin{multicols}{2}
\begin{algorithmic}[1]
\State struct Node \{
\State ~~~~Node *next;
\State ~~~~\PCRQ\ Object \crqV;
\State \};
\vspace{0.05in}\\
\State Node *\textbf{\First{}}, *\textbf{\Last{}} \Comment{both initially point to a Node
with $\var{next}$ being \NULL\ and \crqV\ in initial state}
\\
\Function{Dequeue}{}
	\While{\True{}} \label{plcrq:dqloop}
		\State $\f\gets \First{}$ \label{plcrq:dqread}
		\State $\crqV{}\gets$ \f$\rightarrow$\crqV \label{plcrq:dqread-1}
		\State $v\gets \Call{Dequeue}{\crqV{}}$ \label{plcrq:dqdeq}
		\If{$v\neq \EMPTY{}$} \label{plcrq:dqemptychk}
			\State \Return $v$ \label{plcrq:returnv}
		\EndIf
		\If{\f{}$\rightarrow$\texttt{next}$=\NULL$} \label{plcrq:dqnextchk}
			\State \Return \EMPTY{}
		\EndIf
		\State \Call{CAS}{\First{}, \f{}, \f{}$\rightarrow$\texttt{next}} \label{plcrq:dqheadmove}
	\EndWhile
\EndFunction
\Statex

\Function{Enqueue}{$x$}
	\State Create a new Node \nd\ with $\var{next}$ equal to \NULL, $x$ stored in $\nd.\crqV.\Q{}[0]$,
	$\nd.\crqV.\Tail = 1$, and $\nd.\crqV.\Head = 0$ \label{plcrq:eqnewcrq}\\
	\State \recoverycode{\pwb{$\nd.next, \nd.\crqV.\Q{}[0], \nd.\crqV.\Tail$};} \recoverycode{\psync{};} \label{plcrq:elementpwb}

	\While{\True{}} \label{plcrq:eqloop}
		\State $\ell \gets$ \Last{} \label{plcrq:eqread1}
		\State $\crqV{}\gets \ell$$\rightarrow$\crqV \label{plcrq:eqread2}
		\If{$\ell$$\rightarrow$$\var{next}\neq \NULL{}$} \label{plcrq:eqtailchk}
			\State \recoverycode{\pwb{$\ell$$\rightarrow$next}}; \recoverycode{\psync{}}; \label{plcrq:nextpwb1}
			\State \Call{CAS}{\Last{}, $\ell, \ell$$\rightarrow$next} \label{plcrq:eqtailupdate}
			\State \textbf{continue} \label{plcrq:continue}
		\EndIf
		\If{$\Call{Enqueue}{\crqV{}, \itemX{}}\neq \CLOSED{}$} \label{plcrq:eqenq}
			\State \Return \OK{} \label{plcrq:eqok}
		\EndIf
		\If{\Call{CAS}{$\ell$$\rightarrow$next, \NULL{}, \nd}} \label{plcrq:eqlastchk}
			\State \recoverycode{\pwb{$\ell$$\rightarrow$next};} \recoverycode{\psync{};} \label{plcrq:nextpwb}
			\State \Call{CAS}{$\Last{}, \ell, \nd$} \label{plcrq:eqmovetail}
			\State \Return \OK{} \label{plcrq:ok}
		\EndIf
	\EndWhile
\EndFunction
\Statex

\recoverycode{
\Function{\PLCRQ\ Recovery}{}
\State $\ell \gets$ \First{}
\While{$\ell$ $\neq \Last$}
	\State \Call{Recovery}{$\ell \rightarrow$\crqV} \Comment{From Algorithm~\ref{alg:pcrq}}
	\State $\ell \gets \ell \rightarrow next$
\EndWhile
\While{$\Last$$\rightarrow$next $\neq \NULL$}
	\State \Call{Recovery}{$\Last$$\rightarrow$\crqV} \Comment{From Algorithm~\ref{alg:pcrq}}
	\State \Last $\gets$ \Last$\rightarrow$next
\EndWhile
\State \Call{Recovery}{\Last$\rightarrow$\crqV} \Comment{From Algorithm~\ref{alg:pcrq}}
\EndFunction}
\end{algorithmic}
\end{multicols}
\end{algorithm}

We now discuss the persistence instructions that we added in \LCRQ\ to make it persistent. 
We were able to achieve persistence without adding any persistence instruction to the dequeue code. 
Thus, we focus on enqueue. The \pwbi-\psynci\ pair of line~\ref{plcrq:elementpwb}
is necessary just in case the enqueue returns on line~\ref{plcrq:ok}. However, 
the persistence of \nd.next should be performed 
before the \CAS\ of line~\ref{plcrq:eqlastchk} for the following reason.
Assume that it occurs after the \CAS\ of line~\ref{plcrq:eqlastchk}.
Then, if the system evicts the cache line where the next field of the node on which the \CAS\ 
has been executed resides (i.e., $\ell\rightarrow$next), and before the next field of the appended node, \nd, is persisted,  
then at recovery time, this field may not be \NULL, whereas \nd\ is the last node 
of the recovered list. This jeopardizes the correctness of the algorithm
(for instance the condition of the {\tt if} statement of line~\ref{plcrq:eqtailchk} may succeed,
although \nd\ is the last node in the list).
Note that \nd.\crqV.\Tail\ and \nd.\crqV.$\Q[0]$ can be persisted
after the execution of the \CAS\ of line~\ref{plcrq:eqlastchk}. However, since these are
fields of a struct of type Node, we place them so that they all reside in the same
cache line and can be persisted all together with a single \pwbi.
Therefore, we persist them all together to save on the number of \pwbi s the algorithm performs.       

We now explain why the \pwbi-\psynci\ pair of line~\ref{plcrq:nextpwb} is necessary.
Assume that this pair is omitted.  
Consider an enqueue that returns on line~\ref{plcrq:ok}. If a crash occurs right after the completion of the enqueue,
at recovery time, the node appended by the enqueue and its enqueued item will not be included in the state of the 
linked list. This violates durable linearizability (as the enqueue has been completed, and this it should be linearized). 

We next explain why the \pwbi-\psynci\ pair of line~\ref{plcrq:nextpwb1} is necessary.
Assume that these persistent instructions are omitted. Assume that an enqueue, $\var{enq}_1$
executes up until ine~\ref{plcrq:eqlastchk} and then becomes slow. Now assume that another enqueue,
$\var{enq}_2$ starts its execution, evaluates the condition of the {\tt if} statement 
of line~\ref{plcrq:eqtailchk} to \True, executes lines~\ref{plcrq:eqtailupdate}-\ref{plcrq:continue}, 
and starts a new loop which makes it return on line~\ref{plcrq:eqok}. If a crash occurs at this point, 
the node appended by $\var{enq}_1$, and the items enqueued by both enqueue operations will not appear 
in the recovered state of the linked list. This violates durable linearizability (as $\var{enq}_2$ 
has been completed, and thus it should be linearized).

We now focus on the recovery function. 
Assume that a crash occurs and the system recovers by calling the recovery function once.  
Since \First\ and \Last\ are non-volatile variables, at recovery time, they may have any value they had obtained
by the crash (as the system may have evicted the cache line they reside and written it back to NVM).   

At recovery time, \First\ and \Last\ point either to the first node of the list (if the cache line
in which they reside has never been evicted by the system and written back to NVM) 
or to nodes that has been appended in the linked list by the crash otherwise.  
In the second case, note that the appended nodes may have been dequeued, by the crash, but they are still connected 
to the linked list, i.e., if we start from the initial node in the list and follow next pointers
we will traverse these nodes).

To recover \Last, we start from the value stored for it in NVM (which can be its initial value) 
and we traverse the list following next pointers of its Nodes, until we reach the last node. 
\Last\ will be set to point to this Node. On the way, we call \Recovery\ on the
\crqV\ object of each Node we traverse. \First\ pointer never changes at recovery.
This has a cost in the performance of dequeues that will be executed after the crash,
as they will have to traverse all Nodes until they reach the first such Node whose 
\crqV\ instance contains items to be dequeued.  

\noindent
{\bf Durable Linearizability.}
\PLCRQ\ is durably linearizable. 
We next discuss how to assign linearization points to operations.

{\bf Dequeue operations:} 
Consider a dequeue operation $\var{deq}$. Let \crqV\ be the last instance 
of \PCRQ\ on which $\var{deq}$ calls \Call{Dequeue}{\crqV}. 
$deq$ is linearized in the following cases:
a) \Call{Dequeue}{\crqV} is linearized in \crqV\ and its response is not \EMPTY\ in the linearization of the corresponding execution of \crqV.
b) \Call{Dequeue}{\crqV} is linearized in \crqV, its response is \EMPTY\ in the linearization of the corresponding execution of \crqV,
and $\var{deq}$ returns \EMPTY. 
 
\remove{
if 
If it is at its last loop (line~\ref{plcrq:dqloop}) and 
\Call{Dequeue}{\crqV} (line~\ref{plcrq:dqdeq}) returns either $v\neq \EMPTY$ or
$\crqV==\NULL$ by $c_k$, we assign a linearization point in accordance to the linearization point of the
\Dequeue\ operation of that \crqV\ instance.
}

{\bf Enqueue operations:}  
Consider an enqueue operation $\var{enq}$. Let \crqV\ be the last instance 
of \PCRQ\ on which $\var{enq}$ calls \Call{Enqueue}{\crqV}. 
We linearize an enqueue operation of \PLCRQ\ if the following three conditions hold
\begin{enumerate}
\item If \Call{Enqueue}{\crqV, $x$} returns \EMPTY\ (line~\ref{plcrq:eqenq}), we assign a 
linearization point at the point of the corresponding \Call{Enqueue}{\crqV, $x$} linearization
point.
\item If the change to the
next pointer $\crqV.\var{next}$ (line~\ref{plcrq:eqlastchk}) is written back to NVM and the new 
item that was enqueued is also written back by $c_k$. We assign a linearization point at
the time the \CAS\ operation has taken effect. 
\end{enumerate}

\remove{
Note that the write-back of $\crqV.\var{next}$ can
take effect
either by the \pwbi-\psynci\ pair of line~\ref{plcrq:nextpwb}
or by the system evicting the appropriate cache line. 
Also note that the write-back of the enqueued item can
also take effect
either by the \pwbi-\psynci\ pair of line~\ref{plcrq:elementpwb}
or by the system evicting the appropriate cache line.
}

\section{Performance Evaluation} \label{sec:evaluation}

For the evaluation, we used a 48-core machine (96 logical cores) consisting of 
2 Intel(R) Xeon(R) Gold 5318Y processors with 24 cores each. 
Each core executes two threads concurrently. Our machine is equipped with a 128GB Intel Optane 200 Series Persistent Memory  (DCPMM) and the system is configured in AppDirect mode.
We use the 1.9.2 version of the \textit{Persistent Memory Development Kit}~\cite{PMDK-web}, which
provides the \pwbi\ and \psynci\ persistency instructions.
The operating system is Ubuntu 20.04.6 LTS
(kernel Linux 6.6.3-custom x86\_64) and we use \textit{gcc} v9.4.0. 
Threads were bound in all experiments following a scheduling policy
which distributes the running threads evenly across the machine's NUMA nodes~\cite{FK12ppopp,FK14}.

In the experiments, each run simulates $10^7$ atomic operations in total; each 
thread simulates $10^7/n$ operations (where $n$ is the number of threads). 
We follow a kind of standard approach~\cite{FK11spaa,FK12ppopp,FK14,FriedmanQueue18,DFC,SP21}, 
where each thread performs pairs of \Enqueue\ and \Dequeue\
starting from an empty queue; this kind of experiment avoids performing 
unsuccessful (and thus cheap) operations. 
Experiments where each thread executed random operations
(50\% of each type) did not illustrate significantly different performance trends.

\remove{
Titan 1:
description: DIMM Synchronous Non-volatile LRDIMM 3200 MHz (0.3 ns)
             product: NMB1XXD128GPS
             vendor: Intel
             physical id: 1
             serial: 00000A77
             slot: P1-DIMMB1
             size: 126GiB
             width: 64 bits
             clock: 3200MHz (0.3ns)

}

We compare \PLCRQ\ with the state-of-the-art persistent queue implementations,
namely \PBqueue\ and \PWFqueue~\cite{FKK22}. 
The experimental analysis in~\cite{FKK22}   
compares \PBqueue\ and \PWFqueue\ with the specialized persistent queue implementation 
in~\cite{FriedmanQueue18} (\FHMP), and those recently published in~\cite{SP21}, 
as well as the persistent queue implementations  based 
on the following general techniques and transformations: 
a) \CAPSULES~\cite{NormOptQueue19}, 
b) \Romulus~\cite{CFR18} 
c) \OneFile~\cite{RC+19}, d) \CXPUC~\cite{CFP20eurosys} and \CXPTM~\cite{CFP20eurosys}, and e) \RedoOpt~\cite{CFP20eurosys}.
Experiments in~\cite{FKK22} showed that \PBqueue\ is at least $2$x faster than all these algorithms.
For this reason, we compare the performance of \PLCRQ\ only with the performance
of \PBqueue\ and \PWFqueue~\cite{FKK22}.

\noindent
{\bf Results.}
Figure~\ref{fig:plcrq-comp} shows the performance of \PLCRQ\ against \PBQueue\ and \PWFQueue.
We measure throughput, i.e. number of operations executed per time unit (thus, larger is better), 
as the number of threads increases. We see that \PLCRQ\ (purple line) is 2x faster than \PBqueue, which is the best competitor.  

\begin{figure}[t]
    \centering
    \begin{minipage}[c]{0.35\linewidth}
    \includegraphics[width=\linewidth]{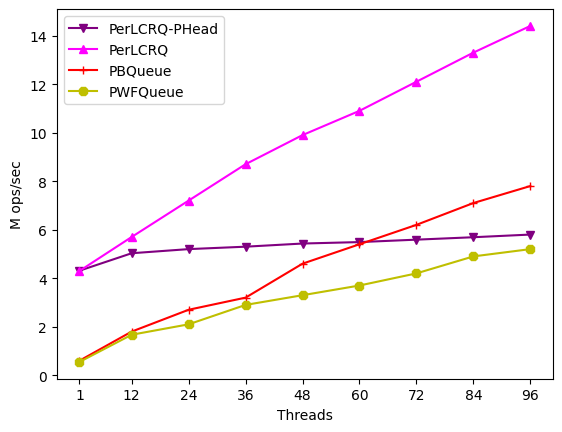}
    \caption{{\scriptsize Performance comparison of \PLCRQ\  with \PBQueue\ and \PWFQueue.}} \label{fig:plcrq-comp}
    \end{minipage}
    \hspace{1cm}
    \begin{minipage}[c]{0.35\linewidth}
    \includegraphics[width=\linewidth]{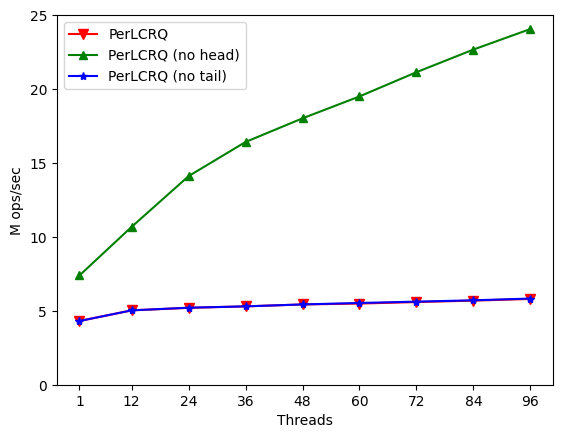}
    \caption{Cost of persisting \Head\ and \Tail\ in \PLCRQ. } \label{fig:over-example}
    \end{minipage}%
\end{figure}

Figure~\ref{fig:plcrq-comp} shows also the cost of persisting different instructions.
Interestingly, although \PLCRQ\ executes only one \pwbi\ per operation, 
persisting shared variable \Head\ is too expensive. As we see,
the version of \PLCRQ\ that persists a shared version of \Head\ (\PLCRQ-PHead)
instead of local copies of \Head\ maintained by each thread, incurs
high performance overhead.  The throughput of \PLCRQ-PHead is much lower than that of \PLCRQ, 
as the number of threads increases, and eventually its performance 
gets surpassed by \PBQueue\ and \PWFQueue. 
Figure~\ref{fig:over-example} provides additional evidence that the drop in performance 
observed in \PLCRQ-PHead, is due to how expensive persisting highly contended shared
variables, such as \Head\ is.  
We denote \PLCRQ\ in which all \pwbi\ instructions on \Head\ have been removed, as \PLCRQ\ {\em (no 
head)} and the \PLCRQ\ in which all \pwbi\ instructions on \Tail\ have been removed, \PLCRQ\ {\em (no 
tail)}. 
From Figure~\ref{fig:over-example}, we see that the persistence cost of persisting \Tail\ is negligible.
This shows that \PLCRQ\ persists \Tail\ rarely (only when it closes the current instance of \CRQ).
Additionally, it provides evidence that the optimization of using the  $\var{closedFlag}$ to avoid 
persisting \Tail\
more than once in the case a \CRQ\ instance needs to close, works well. 
By comparing the throughput of \PLCRQ\ (from Figure~\ref{fig:plcrq-comp}) with that of \PLCRQ\ 
\emph{(no head)}
in Figure~\ref{fig:over-example}, we see the cost of persisting \Head\ even when the algorithm
uses a local copy of \Head\ for each thread. 

\noindent
{\bf Evaluation of the recovery cost.}
We have developed a custom failure framework to evaluate the recovery cost of algorithms.
The framework provides a shared variable called ``recovery\_steps''. All threads "monitor" 
this variable and each operation periodically (or randomly) lowers the value by $1$ step. When it reaches 
$0$, any thread running will cease, effectively simulating a 
crash of all threads. Afterwards, the recovery function is launched 
by some thread; this simulates the system running the recovery function.

The above procedure - a standard run where recovery steps are being decreased, 
a random crash that occurs when these steps become $0$, and the 
recovery function run - is called a {\em cycle}. Each evaluation test has $10$ cycles and 
we measure only the third part of each cycle, which corresponds to the recovery cost. 
The average of the $10$ cycles is the result shown on the graphs in Figures \ref{fig:piq-op} and \ref{fig:piq-size}.

\begin{figure}[htb]
    \centering
    \begin{minipage}[c]{0.32\linewidth}
    \includegraphics[width=\linewidth]{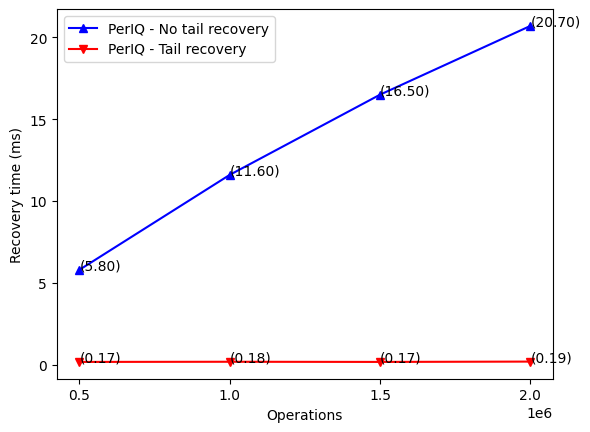}
    \caption{Recovery time of \PIQ\ as the number of operations increases.} \label{fig:piq-op}
    \end{minipage}
    \begin{minipage}[c]{0.32\linewidth}
    \includegraphics[width=\linewidth]{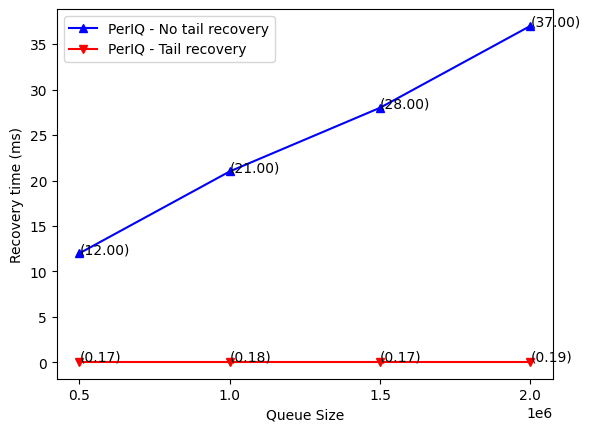}
    \caption{Recovery time of \PIQ\ as the queue size increases.} \label{fig:piq-size}
    \end{minipage}
    \begin{minipage}[c]{0.32\linewidth}
        \includegraphics[width=\linewidth]{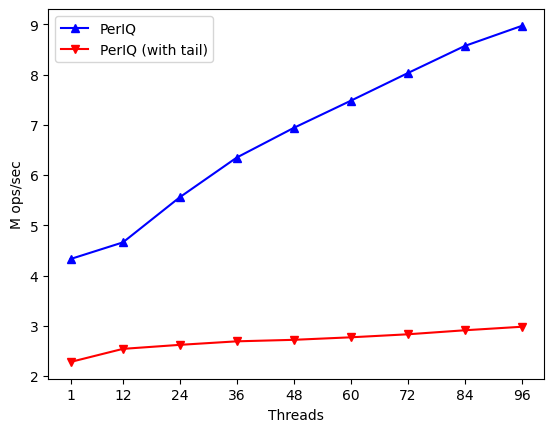}
        \caption{Throughput of \PIQ\ and \PIQ\ (no \Tail). } \label{fig:piq-tail-no-tail}
    \end{minipage}
\end{figure}

\PIQ\ can be designed to illustrate a tradeoff between the persistence cost at
normal execution time (where no failures occur) and the recovery cost. 
The idea is that each thread, periodically, persists \Head\ and \Tail, in addition
to the values it writes in $Q$. This may happen e.g. every time the thread has completed 
the execution of a specific number of operations. 
Algorithm~\ref{alg:enhanced-piq} presents the enhanced pseudocode 
of \PIQ\ to accommodate this variant.

\begin{algorithm}[t]
    \caption{Variant of \Enqueue\ of \PIQ, used to illustrate the tradeoff between the persistence and the recovery cost.}\label{alg:enhanced-piq}
    \begin{algorithmic}
    \State $\nOps_i{}$: counter of thread $p_i$, initially. It counts the the number of enqueue operations $p_i$ has performed. 
    \Statex

    \Function{Enqueue}{$\itemX{}$: Item}
        \While{\True{}} \label{enhanced-piq:loop}
            \State $\localTail{}\gets \Call{Fetch\&Increment}{\Tail{}}$\label{enhanced-piq:faa}

            \If{\recoverycode{$\nOps_i{} == \var{Threshold}$}}
                \State \recoverycode{\pwb{\Tail{}}}
                \State \recoverycode{\psync{}}
                \State \recoverycode{$\nOps_i{} = 0$}
  \LComment{If $\var{Threshold}=\infty$, then persisting \Tail{} is disabled}
            \EndIf
            \If{$\Call{Get\&Set}{\Q{}[\localTail{}], \itemX{}}==\bot$}\label{enhanced-piq:swap}
                \State \recoverycode{\pwb{$\Q{}[\localTail{}]$}} \label{enhanced-piq:eqpwb}
                \State \recoverycode{\psync{}}
                \State \recoverycode{$\nOps{}\gets \nOps{} + 1$}
                \State \Return \OK{}
            \EndIf
        \EndWhile
    \EndFunction
    \Statex

    \end{algorithmic}
\end{algorithm}

Figure~\ref{fig:piq-op} compares the time needed to recover the queue 
after a crash that takes place when a predetermined number of operations have been executed. 
Figure~\ref{fig:piq-size} compares the time  (lower is better) needed to recover the queue based on its size. 
Figure~\ref{fig:piq-tail-no-tail} shows the throughput  (lower is worse) of the \PIQ\ compared to the variation of 
persisting the tail index on each operation. We can see that there is a clear tradeoff between the 
throughput of the persistence of the tail index and how quick the algorithm's recovery function is. 
As shown in Figures \ref{fig:piq-op} to \ref{fig:piq-tail-no-tail}, if we do not persist \Tail, the 
recovery function is slower as the queue gets larger, but the throughput of the \PIQ\ is high. On 
the other hand, if we persist \Tail, the recovery function is very fast, while the throughput of 
\PIQ\ is lower.




\bibliographystyle{splncs04}
\bibliography{refs-copy}

%
\end{document}